\newdimen\proofrulebreadth \proofrulebreadth=.05em
\newdimen\proofdotseparation \proofdotseparation=1.25ex
\newdimen\proofrulebaseline \proofrulebaseline=2ex
\let\then\relax
\def\hfi{\hskip0pt plus.0001fil}
\mathchardef\squigto="3A3B
\newif\ifinsideprooftree\insideprooftreefalse
\newif\ifonleftofproofrule\onleftofproofrulefalse
\newif\ifproofdots\proofdotsfalse
\newif\ifdoubleproof\doubleprooffalse
\let\wereinproofbit\relax
\newdimen\shortenproofleft
\newdimen\shortenproofright
\newdimen\proofbelowshift
\newbox\proofabove
\newbox\proofbelow
\newbox\proofrulename
\def\shiftproofbelow{\let\next\relax\afterassignment\setshiftproofbelow\dimen0 }
\def\shiftproofbelowneg{\def\next{\multiply\dimen0 by-1 }%
\afterassignment\setshiftproofbelow\dimen0 }
\def\setshiftproofbelow{\next\proofbelowshift=\dimen0 }
\def\setproofrulebreadth{\proofrulebreadth}
\def\prooftree{
%
\ifnum  \lastpenalty=1
\then   \unpenalty
\else   \onleftofproofrulefalse
\fi
%
\ifonleftofproofrule
\else   \ifinsideprooftree
        \then   \hskip.5em plus1fil
        \fi
\fi
%
\bgroup
\setbox\proofbelow=\hbox{}\setbox\proofrulename=\hbox{}%
\let\justifies\proofover\let\leadsto\proofoverdots\let\Justifies\proofoverdbl
\let\using\proofusing\let\[\prooftree
\ifinsideprooftree\let\]\endprooftree\fi
\proofdotsfalse\doubleprooffalse
\let\thickness\setproofrulebreadth
\let\shiftright\shiftproofbelow \let\shift\shiftproofbelow
\let\shiftleft\shiftproofbelowneg
\let\ifwasinsideprooftree\ifinsideprooftree
\insideprooftreetrue
%
\setbox\proofabove=\hbox\bgroup$\displaystyle 
\let\wereinproofbit\prooftree
%
\shortenproofleft=0pt \shortenproofright=0pt \proofbelowshift=0pt
%
\onleftofproofruletrue\penalty1
}
\def\eproofbit{
%
\ifx    \wereinproofbit\prooftree
\then   \ifcase \lastpenalty
        \then   \shortenproofright=0pt  
        \or     \unpenalty\hfil         
        \or     \unpenalty\unskip       
        \else   \shortenproofright=0pt  
        \fi
\fi
%
\global\dimen0=\shortenproofleft
\global\dimen1=\shortenproofright
\global\dimen2=\proofrulebreadth
\global\dimen3=\proofbelowshift
\global\dimen4=\proofdotseparation
\global\count255=\proofdotnumber
%
$\egroup  
%
\shortenproofleft=\dimen0
\shortenproofright=\dimen1
\proofrulebreadth=\dimen2
\proofbelowshift=\dimen3
\proofdotseparation=\dimen4
\proofdotnumber=\count255
}
\def\proofover{
\eproofbit 
\setbox\proofbelow=\hbox\bgroup 
\let\wereinproofbit\proofover
$\displaystyle
}%
\def\proofoverdbl{
\eproofbit 
\doubleprooftrue
\setbox\proofbelow=\hbox\bgroup 
\let\wereinproofbit\proofoverdbl
$\displaystyle
}%
\def\proofoverdots{
\eproofbit 
\proofdotstrue
\setbox\proofbelow=\hbox\bgroup 
\let\wereinproofbit\proofoverdots
$\displaystyle
}%
\def\proofusing{
\eproofbit 
\setbox\proofrulename=\hbox\bgroup 
\let\wereinproofbit\proofusing
\kern0.3em$
}
\def\endprooftree{
\eproofbit 
  \dimen5 =0pt
%
\dimen0=\wd\proofabove \advance\dimen0-\shortenproofleft
\advance\dimen0-\shortenproofright
%
\dimen1=.5\dimen0 \advance\dimen1-.5\wd\proofbelow
\dimen4=\dimen1
\advance\dimen1\proofbelowshift \advance\dimen4-\proofbelowshift
%
\ifdim  \dimen1<0pt
\then   \advance\shortenproofleft\dimen1
        \advance\dimen0-\dimen1
        \dimen1=0pt
        \ifdim  \shortenproofleft<0pt
        \then   \setbox\proofabove=\hbox{%
                        \kern-\shortenproofleft\unhbox\proofabove}%
                \shortenproofleft=0pt
        \fi
\fi
%
\ifdim  \dimen4<0pt
\then   \advance\shortenproofright\dimen4
        \advance\dimen0-\dimen4
        \dimen4=0pt
\fi
%
\ifdim  \shortenproofright<\wd\proofrulename
\then   \shortenproofright=\wd\proofrulename
\fi
%
\dimen2=\shortenproofleft \advance\dimen2 by\dimen1
\dimen3=\shortenproofright\advance\dimen3 by\dimen4
%
\ifproofdots
\then
        \dimen6=\shortenproofleft \advance\dimen6 .5\dimen0
        \setbox1=\vbox to\proofdotseparation{\vss\hbox{$\cdot$}\vss}%
        \setbox0=\hbox{%
                \advance\dimen6-.5\wd1
                \kern\dimen6
                $\vcenter to\proofdotnumber\proofdotseparation
                        {\leaders\box1\vfill}$%
                \unhbox\proofrulename}%
\else   \dimen6=\fontdimen22\the\textfont2 
        \dimen7=\dimen6
        \advance\dimen6by.5\proofrulebreadth
        \advance\dimen7by-.5\proofrulebreadth
        \setbox0=\hbox{%
                \kern\shortenproofleft
                \ifdoubleproof
                \then   \hbox to\dimen0{%
                        $\mathsurround0pt\mathord=\mkern-6mu%
                        \cleaders\hbox{$\mkern-2mu=\mkern-2mu$}\hfill
                        \mkern-6mu\mathord=$}%
                \else   \vrule height\dimen6 depth-\dimen7 width\dimen0
                \fi
                \unhbox\proofrulename}%
        \ht0=\dimen6 \dp0=-\dimen7
\fi
%
\let\doll\relax
\ifwasinsideprooftree
\then   \let\VBOX\vbox
\else   \ifmmode\else$\let\doll=$\fi
        \let\VBOX\vcenter
\fi
\VBOX   {\baselineskip\proofrulebaseline \lineskip.2ex
        \expandafter\lineskiplimit\ifproofdots0ex\else-0.6ex\fi
        \hbox   spread\dimen5   {\hfi\unhbox\proofabove\hfi}%
        \hbox{\box0}%
        \hbox   {\kern\dimen2 \box\proofbelow}}\doll%
%
\global\dimen2=\dimen2
\global\dimen3=\dimen3
\egroup 
\ifonleftofproofrule
\then   \shortenproofleft=\dimen2
\fi
\shortenproofright=\dimen3
%
\onleftofproofrulefalse
\ifinsideprooftree
\then   \hskip.5em plus 1fil \penalty2
\fi
}
\newtheorem{definition}{Definition}[section]
\newtheorem{lemma}[definition]{Lemma}
\newtheorem{proposition}[definition]{Proposition}
\newtheorem{theorem}[definition]{Theorem}
\newtheorem{example}[definition]{Example}
\newtheorem{fact}[definition]{Fact}
\DeclareMathAlphabet{\mathpzc}{OT1}{pzc}{m}{it}
\renewcommand{\vec}[1]{\bm{#1}}
\newcommand{\set}[1]{\{#1\}}
\newcommand{\Comment}[1]{ }
\newcommand{\cons}{\!:\!}
\newcommand{\ored}[1]{\stackrel{#1}{\longrightarrow}}      
\newcommand{\rlbk}{\sf rbk}
\newcommand{\rb}{\sf rb}
\newcommand{\CommRule}{\textsf{comm}}
\newcommand{\RbkRule}{\textsf{rbk}}
\newcommand{\comply}{\dashv}
\newcommand{\complyR}{~\begin{sideways}\begin{sideways}$\Vdash$\end{sideways}\end{sideways}~}
\newcommand{\ncomplyR}{\not\!\!\!\complyR}
\newcommand{\complyF}{\comply}
\newcommand{\Bag}{\textsf{bag}}
\newcommand{\DBag}{\Dual{\Bag}}
\newcommand{\Belt}{\textsf{belt}}
\newcommand{\DBelt}{\Dual{\Belt}}
\newcommand{\Price}{\textsf{price}}
\newcommand{\DPrice}{\Dual{\Price}}
\newcommand{\Card}{\textsf{card}}
\newcommand{\DCard}{\Dual{\Card}}
\newcommand{\Cash}{\textsf{cash}}
\newcommand{\DCash}{\Dual{\Cash}}
\newcommand{\Dual}[1]{\overline{#1}}
\newcommand{\der}{\;\vartriangleright\;}
\newcommand{\nder}{\;\not\vartriangleright\;}
\newcommand{\CkptcomplHyp}{\scriptsize \mbox{\sc Hyp}}
\newcommand{\CkptcomplAx}{\mbox{\scriptsize\sc Ax}}
\newcommand{\Set}[1]{\{#1\}}
\newcommand{\Iff}{\text { iff }}
\newcommand{\Impl}{\Rightarrow}
\renewcommand{\implies}{\text{ implies }}
\newcommand{\lts}[1]{\stackrel{#1}{\longrightarrow}}
\newcommand{\rec}{{\sf rec} \, }
\newcommand{\Names}{{\cal N}}
\newcommand{\CoNames}{\overline{\Names}}
\newcommand{\stopA}{{\bf 1}}
\newcommand{\Sbehav}{{\sf RC}}
\newcommand{\SbehavH}{{\sf RCH}}
\newcommand{\Stacks}{{\sf Histories}}
\newcommand{\emptystack}{[\;]}
\newcommand{\back}{\prec}
\newcommand{\np}[2]{#1\back#2}
\newcommand{\pp}{~\|~}
\newcommand{\s}{\Sbehav}
\newcommand{\ctrs}{contracts}
\newcommand{\Ctrs}{Contracts}
\newcommand{\Prove}{\textbf{Prove}}
\newcommand{\IF}{\textbf{if}}
\newcommand{\THEN}{\textbf{then}}
\newcommand{\ELSE}{\textbf{else}}
\newcommand{\AND}{\textbf{and}}
\newcommand{\FAIL}{\textbf{fail}}
\newcommand{\FA}{\textbf{for all}}
\begin{document}

%
\title{Retractable Contracts
\thanks{This work was partially supported by Italian MIUR PRIN
Project CINA Prot.\ 2010LHT4KM and COST Action IC1201 BETTY.}
}


\def\titlerunning{Retractable Contracts
}

%
\author{
Franco Barbanera\institute{Dipartimento di Matematica e Informatica,
  University of Catania,
  \email{barba@dmi.unict.it}} 
\and
Mariangiola Dezani-Ciancaglini\institute{Dipartimento di Informatica,
University of Torino,
\email{dezani@di.unito.it}} \thanks{This author was partially supported by the Torino University/Compagnia San Paolo Project SALT.}
\and
Ivan Lanese\institute{Dipartimento di Informatica - Scienza e Ingegneria,
   University of Bologna/INRIA,
 \email{ivan.lanese@gmail.com}} \thanks{This author was partially supported by the French ANR project REVER n.\ ANR 11 INSE 007 and COST Action IC1405.}
\and
Ugo de'Liguoro\institute{Dipartimento di Informatica,
University of Torino,
\email{deliguoro@di.unito.it}}  \thanks{This author was partially supported by the Torino University/Compagnia San Paolo Project SALT.}
}


\def\authorrunning{Barbanera, Dezani, Lanese and de'Liguoro}

\providecommand{\publicationstatus}{To appear in EPTCS.}

\clearpage

\maketitle

\begin{abstract}
In calculi for modelling communication protocols, internal and external choices play dual roles. Two external choices can be viewed naturally as dual too, as they represent an agreement between the communicating  parties. If the interaction fails, the past agreements are good candidates as points where to roll back, in order to take a different agreement. We propose a variant of contracts with synchronous rollbacks to agreement points in case of deadlock. The new calculus is equipped with a compliance relation which is shown to be decidable. 
\end{abstract}

\thispagestyle{empty}

\section{Introduction}
\label{sect:introduction}

In human as well as  automatic negotiations, an interesting feature is the ability of rolling back to some previous point in case of failure, undoing previous choices and  possibly trying a different path. {\em Rollbacks} are familiar to the users of web browsers, and so are also the troubles that these might cause during ``undisciplined'' interactions. Clicking the ``back'' button, or going to some previous point in the chronology when we are in the middle of a transaction, say the booking of a flight, can be as smart as dangerous. In any case, it is surely a behaviour that service programmers want to discipline. Also the converse has to be treated with care: a server discovering that an auxiliary service becomes available after having started a conversation could take advantage of it using some kind of rollback. However, such a server would be quite unfair if the rollback were completely hidden from the client. 

Let us consider an example.
A $\textsf{Buyer}$ is looking for a bag ($\DBag$) or a belt ($\DBelt)$; she will decide how to pay, either by credit card ($\DCard$) or by cash ($\DCash$), after knowing the $\Price$ from a $\textsf{Seller}$. The $\textsf{Buyer}$ behaviour can be described by the process:
\[\textsf{Buyer} = \DBag.\Price.(\DCard \oplus \DCash) \oplus \DBelt. \Price.(\DCard \oplus \DCash)\]
where dot is sequential composition and $\oplus$ is internal choice. 
The $\textsf{Seller}$ does not accept credit card payments for items of low price, like belts, but only for more expensive ones, like bags:
\[\textsf{Seller} = \Belt. \DPrice.\Cash + \Bag.\DPrice.(\Card + \Cash) \]
where $+$ is external choice.
According to contract theory~\cite{CGP10}, $\textsf{Buyer}$ is not compliant with $\textsf{Seller}$, since she can choose to pay the belt by card. Also, there is no obvious way to represent the buyer's will to be free in her decision about the payment and be compliant with a seller without asking the seller in advance. Nonetheless, when interacting with $\textsf{Seller}$, the buyer's decision is actually free at least in the case of purchase of a bag. For exploiting such a possibility the client (but also the server) should be able to tolerate a partial failure of her protocol, and to try a different path. 

To this aim we add to (some) choices a possibility of rollback, in
case the taken path fails to reach a success configuration. In
this setting, choices among outputs are no more purely internal, since
the environment may oblige to undo a wrong choice and choose a
different alternative. For this reason, we denote choices between outputs which allow rollback as external, hence we use $\textsf{Buyer}'$ below instead of $\textsf{Buyer}$:
\[\textsf{Buyer}' = \DBag.\Price.(\DCard \oplus \DCash) + \DBelt. \Price.(\DCard \oplus \DCash)\]
We thus explore a model of contract interaction in which synchronous
rollback is triggered when client and server fail to reach an agreement.

In defining our model we build over some previous work reported in
\cite{BarbaneraDdL14}, where we have considered contracts with
rollbacks. However, we depart from that model on three main
aspects. First, in the present model rollback is used in a
disciplined way to tolerate failures in the interaction, thus
improving compatibility, while in \cite{BarbaneraDdL14} it is an
internal decision of either client or server, which makes
compatibility more difficult. Second, we embed checkpoints in the
structure of contracts, avoiding explicit checkpoints. Third, we
consider a stack of ``pasts'', called histories, instead of just one
past for each participant, as in~\cite{BarbaneraDdL14}, thus allowing
to undo many past choices looking for a successful alternative.

\section{Contracts for retractable interactions}
\label{sect:contracts}
Our \ctrs\ can be obtained from the session behaviours of~\cite{BdL10} or from the session contracts of~\cite{BH13} just adding external retractable choices between outputs.

\begin{definition}[Retractable \Ctrs]
\label{Adef:ckpt-behav}
Let $\Names$ {\em (set of names)} be some countable set of symbols and $\CoNames = \Set{\Dual{a} \mid a \in \Names}$ {\em (set of conames)}, with
$\Names\cap\CoNames = \emptyset$. 
The set $\s$ of {\bf retractable \ctrs} is defined as the set of the {\bf closed} expressions generated by the following 
grammar, 
\[\begin{array}{lcl@{\hspace{4mm}}l}
\sigma,\rho&:=& \mid ~ \stopA &\mbox{success} \\[1mm]
       &     & \mid ~\sum_{i\in I} a_i.\sigma_i  & \mbox{(retractable) input} \\[1mm]
       &     & \mid ~\sum_{i\in I} \Dual{a}_i.\sigma_i& \mbox{retractable output}\\[1mm]
       &     & \mid ~\bigoplus_{i\in I} \Dual{a}_i.\sigma_i & \mbox{unretractable output}\\[1mm]
       &     & \mid  ~x  & \mbox{variable}\\[1mm]
       &     & \mid ~\rec x. \sigma &  \mbox{recursion}
\end{array}
\]
where $I$ is non-empty and finite, the names 
and  the conames 
in choices are pairwise distinct and $\sigma$ is not a variable in $\rec x.\sigma$.
\end{definition}
\noindent
Note that recursion in $\s$ is guarded and hence contractive in the usual sense. We take an equi-recursive view of recursion by equating $\rec x.\sigma  $ with $\sigma[\rec x.\sigma/x]$. We use $\alpha$ to range over $\Names\cup\CoNames$,
with the convention $\Dual{\alpha}=\begin{cases}
 \Dual{a}     & \text{if }\alpha=a, \\
    a  & \text{if }\alpha=\Dual{a}.
\end{cases}$ \\
We write $\alpha_1.\sigma_1 + \alpha_2.\sigma_2$ for binary
input/retractable output and $\Dual{a}_1.\sigma_1 \oplus
\Dual{a}_2.\sigma_2$ for binary unretractable output. They are both
commutative by definition. Also, $\Dual{a}.\sigma$ may denote both unary
retractable output 
and unary unretractable output. 
This is not a
source of confusion since they have the same semantics.
\medskip

\noindent
From now on we call just {\em \ctrs} the expressions in $\s$. They are written by omitting all trailing $\stopA$'s.

\medskip

In order to deal with rollbacks we decorate \ctrs\ with histories, which memorise the alternatives in choices which have been discharged. We use `$\circ$' as a placeholder for {\em no-remaining-alternatives}.

\begin{definition}[\Ctrs\ with histories]
Let $\Stacks$  be the expressions (referred to also as {\em stacks}) generated by the grammar:
$$\vec{\gamma} ::= \emptystack \mid  \vec{\gamma} \cons  \sigma$$
where $\sigma\in\Sbehav\cup\{\circ\}$ and $\circ\not\in\Sbehav$. Then the set of {\em \ctrs\ with histories} is defined by:
$$\SbehavH = \{ \np{\vec{\gamma}}\sigma \mid \vec{\gamma}\in \Stacks, \sigma\in \Sbehav\cup\{\circ\}\, \}.$$
\end{definition}
\noindent
We write just $\sigma_1\cons \cdots \cons \sigma_k$ for the stack $(\cdots(\emptystack\cons \sigma_1) \cons \cdots )\cons \sigma_{k}$. With a little abuse of notation we use ` $\cons$ ' also to concatenate histories, and to add \ctrs\ in front of histories.

\bigskip

We can now discuss the operational semantics of our calculus (Definition~\ref{scs}). The reduction rule for the internal choice ($\oplus$) is standard, but for the presence of the $\np{\vec{\gamma}}{\cdot}$. Whereas, when reducing retractable choices (+), the discharged branches are memorised. When a single action is executed, the history is modified by adding a `$\circ$', intuitively meaning that the only possible branch has been tried and no alternative is left. Rule $(\rb)$ recovers the contract on the top of the stack, replacing the current one with it.

\begin{definition}[LTS of \Ctrs\ with Histories]\label{scs}
\[\begin{array}{rl@{\hspace{16mm}}ll}
(+) & \np{\vec{\gamma}}{\alpha.\sigma + \sigma'}
			\ored{\alpha} 
	\np{\vec{\gamma} \cons \sigma'}\sigma  & 
(\oplus) &  \np{\vec{\gamma}}{{ \Dual{a}.\sigma \oplus \sigma'}}
			\ored{\tau} 
		\np{\vec{\gamma}}\Dual{a}.\sigma 
\\ [2mm]
(\alpha)
& \np{\vec{\gamma}}\alpha.\sigma \ored{\alpha} \np{\vec{\gamma}\cons\circ}{\sigma}
&
(\rb)  &
	\np{\vec{\gamma}\cons\sigma'}{\sigma}  \lts{\rb}  \np{\vec{\gamma}}\sigma'
\end{array}\]
\end{definition}

The interaction of a client with a server is modelled by the reduction of their parallel composition, that can be either
forward, consisting of CCS style synchronisations and single internal choices, or backward, only when there is no possible forward reduction, and the client is not satisfied, i.e. it is different from $\stopA$. 

\begin{definition}[LTS of Client/Server Pairs]\label{def:cspairlts} 
We define the relation $\ored{}$ over pairs of  \ctrs\ with histories by the following rules:
\[\begin{array}{c}
\prooftree
\np{\vec{\delta}}\rho \ored{\alpha} \np{\vec{\delta'}}{\rho'} \qquad \np{\vec{\gamma}}\sigma \ored{\Dual{\alpha}} \np{\vec{\gamma'}}{\sigma'} 
\justifies
\np{\vec{\delta}}\rho\pp \np{\vec{\gamma}}\sigma \ored{} \np{\vec{\delta'}}{\rho'}\pp \np{\vec{\gamma'}}{\sigma'}
\using (\text{\em \CommRule})
\endprooftree \\[8mm] 
\prooftree
\np{\vec{\delta}}\rho \ored{\tau} \np{\vec{\delta} }{\rho'} 
\justifies
\np{\vec{\delta}}\rho\pp \np{\vec{\gamma}}\sigma \ored{} \np{\vec{\delta}}{\rho'}\pp \np{\vec{\gamma}}{\sigma}
\using (\tau)
\endprooftree\\[8mm] 
\prooftree
\np{\vec{\gamma}}{\rho} \lts{\rb} \np{\vec{\gamma'}}{\rho'} \qquad \np{\vec{\delta}}{\sigma} \lts{\rb} \np{\vec{\delta'}}{\sigma'} 
	\qquad  \rho\neq\stopA
\justifies
\np{\vec{\gamma}}{\rho}\pp\np{\vec{\delta}}{\sigma} \ored{} \np{\vec{\gamma'}}{\rho'}\pp\np{\vec{\delta'}}{\sigma'}
\using (\text{\em \RbkRule})
\endprooftree 
\end{array}\]
plus the rule symmetric to $(\tau)$ w.r.t. $\|$. Moreover, rule $(\text{\em \RbkRule})$ applies only if neither $(\text{\em \CommRule})$ nor $(\tau)$ do.
\end{definition}
\noindent
We will use $\ored{*}$ and $\not\!\!\ored{}$ with the standard meanings. \\
Notice that, since `$\circ$' cannot synchronise with anything, in case a partner rolls back to a `$\circ$', it is forced to recover an {\em older} past (if any).   

\medskip
The following examples show the different behaviours of retractable and unretractable outputs. We decorate arrows with the name of the used reduction rule. 
As a first example we consider a possible reduction of the process discussed in the Introduction.
\begin{example}\label{example1}
{\em As in the Introduction, let
$\textsf{Buyer}' = \DBag.\Price.(\DCard \oplus \DCash) + \DBelt. \Price.(\DCard \oplus \DCash)$ be a client and $\textsf{Seller} = \Belt. \DPrice.\Cash + \Bag.\DPrice.(\Card + \Cash)$ a server; then
\[\begin{array}{lrcl}
  &
\np{\emptystack}{\textsf{Buyer}'}
 & \pp & \np{\emptystack}{\textsf{Seller}}\\[2mm]

 \ored{\CommRule} &
	\np{\DBag.\Price.(\DCard \oplus \DCash)}{\Price.(\DCard \oplus \DCash)}
& \pp & \np{\Bag.\DPrice.(\Card + \Cash)}{\DPrice.\Cash}\\[2mm]

 \ored{\CommRule} &
	\np{\DBag.\Price.(\DCard \oplus \DCash)\cons \circ }{(\DCard \oplus \DCash)}
 & \pp & \np{\Bag.\DPrice.(\Card + \Cash)\cons \circ }{\Cash}\\[2mm]

 \ored{\tau} &
	\np{\DBag.\Price.(\DCard \oplus \DCash)\cons \circ }{\DCard}
& \pp & \np{\Bag.\DPrice.(\Card + \Cash)\cons \circ }{\Cash}\\[2mm]

\ored{\RbkRule} &
	\np{\DBag.\Price.(\DCard \oplus \DCash)}{\circ }
 & \pp & \np{\Bag.\DPrice.(\Card + \Cash)}{\circ }\\[2mm]

\ored{\RbkRule} &
	\np{\emptystack}{\DBag.\Price.(\DCard \oplus \DCash) }
 & \pp & \np{\emptystack}{\Bag.\DPrice.(\Card + \Cash)}\\[2mm]

\ored{\CommRule} &
	\np{\circ}{\Price.(\DCard \oplus \DCash) }
 & \pp & \np{\circ}{\DPrice.(\Card + \Cash)}\\[2mm]

\ored{\CommRule} &
	\np{\circ\cons\circ}{(\DCard \oplus \DCash) }
 & \pp & \np{\circ\cons\circ}{(\Card + \Cash)}\\[2mm]

\ored{\tau} &
	\np{\circ\cons\circ}{\DCard }
 & \pp & \np{\circ\cons\circ}{(\Card + \Cash)}\\[2mm]

\ored{\CommRule} &
	\np{\circ\cons\circ\cons\circ}{\stopA }
 & \pp & \np{\circ\cons\circ\cons\Cash}{\stopA}\\[2mm]

\,\,\not\!\!\ored{}
	
\end{array}\]
}
\end{example}

\begin{example}\label{exampleRed1}
{\em
Let $\rho = \rec x.(\Dual{b}.x \oplus \Dual{a}.c.x)$ and  
     $\sigma = {\rec x.(b.x + a.\Dual{e}.x)}$. The following reduction sequence leads the parallel composition of these contracts to a deadlock.
 \[\begin{array}{llrcl}
\rho\pp\sigma&
 \ored{\tau} &
	\np{\emptystack}{\Dual{a}.c.\rho}
& \pp &\np{\emptystack}{\rec x.(b.x + a.\Dual{e}.x)}\\[2mm]
&\ored{\CommRule} &
	\np{\circ}{c.\rho}
& \pp & \np{b.\sigma }{\Dual{e}.\sigma}\\[2mm]
&
 \ored{\RbkRule } &
	\np{\emptystack}{\circ}
& \pp & \np{\emptystack }{b.\sigma }\\[2mm]
&
\,\,\not\!\!\ored{ } 
\end{array}\]
}
\end{example}

\begin{example}\label{exampleRed2}
{\em
Let us now modify the above example by using retractable outputs in the client, so making the two contracts in parallel always reducible. The following reduction shows that there can be an infinite number of  rollbacks in a sequence, even if it is not possible to have an infinite reduction
containing only rollbacks.
Notice how the stack keeps growing indefinitely.

\noindent
Let $\rho = \rec x.(\Dual{b}.x + \Dual{a}.c.x)$ and  
     $\sigma = {\rec x.(b.x + a.\Dual{e}.x)}$.
\[\begin{array}{llrcl}
  \rho\pp\sigma& \ored{\CommRule} &
	\np{\Dual{b}.\rho}{c.\rho}
& \pp &\np{b.\sigma}{\Dual{e}.\sigma}\\[2mm]
&
 \ored{ \RbkRule} &
	\np{\emptystack}{\Dual{b}.\rho}
& \pp & \np{\emptystack }{b.\sigma}\\[2mm]
&
 \ored{\CommRule } &
	\np{\circ}{\rho}
& \pp & \np{ \circ}{\sigma}\\[2mm]
&
 \ored{\CommRule} &
	\np{\circ\cons \Dual{b}.\rho}{c.\rho}
& \pp &\np{\circ\cons b.\sigma}{\Dual{e}.\sigma}\\[2mm]
&
 \ored{ \RbkRule} &
	\np{\circ}{\Dual{b}.\rho}
& \pp & \np{\circ }{b.\sigma}\\[2mm]
&
 \ored{\CommRule } &
	\np{\circ\cons\circ}{\rho}
& \pp & \np{ \circ\cons\circ}{\sigma}\\[2mm]
&
  & 
	 & \vdots
\end{array}\]
}
\end{example}

\section{Compliance}

The compliance relation for standard \ctrs\  consists in requiring that, whenever no reduction is possible, all client requests and offers have been satisfied, i.e.\ the client is in the success state $\stopA$. For  retractable \ctrs\ we can adopt the same definition. 

\begin{definition}[Compliance Relation $\complyR$]\label{ccr}\hfill
\begin{enumerate}[i)]
\item \label{c1}
The relation $\complyR$ on \ctrs\ with histories is defined by:\\
\centerline{$\begin{array}{ll}\np{\vec{\delta}}\rho\complyR \np{\vec{\gamma}}\sigma  \text{ whenever }
\np{\vec{\delta}}\rho\pp \np{\vec{\gamma}}\sigma
                    \ored{*}
                     \np{\vec{\delta'}}\rho'\pp \np{\vec{\gamma'}}\sigma'
                     \not\!\!\ored{} 
\mbox{ implies }\rho'=\stopA\\ \hfill\text{ for any } 
\vec{\delta'},\rho',\vec{\gamma'},\sigma'.
\end{array}$}
\item\label{c2} The relation $\complyR$ on \ctrs\ is defined by:\\
\centerline{$\rho\complyR\sigma ~~~~\text{if}~~~~ \np\emptystack\rho\complyR \np\emptystack\sigma.
$}
\end{enumerate}

\end{definition}

We now provide a formal system characterising 
compliance on retractable \ctrs.\\
The judgments are of the shape $\Gamma\der\rho \complyF \sigma$, 
where $\Gamma$ is a set of expressions of the form
$\rho' \complyF \sigma'$.
We write $\der\rho \complyF \sigma$ when $\Gamma$ is empty. The only non standard rule is rule $(+,+)$, which assures compliance of two retractable choices when they contain respectively a name and the corresponding coname followed by compliant contracts. This contrasts with the rules $(\oplus,+)$ and $(+,\oplus)$, where all conames in unretractable choices between outputs must have corresponding names in the choices between inputs, followed by compliant contracts. 

\begin{definition}[Formal System for Compliance]\label{def:formalCompl}
\[\begin{array}{c@{\hspace{8mm}}c@{\hspace{8mm}}c}
\prooftree 
\justifies
	\Gamma\der \stopA \complyF \sigma
\using(\CkptcomplAx)
\endprooftree 
&
\prooftree
\justifies
\Gamma, \rho\complyF\sigma \der \rho\complyF\sigma 
\using(\CkptcomplHyp)
\endprooftree 
&
\prooftree  
     \Gamma, \alpha.\rho+\rho'\complyF\Dual{\alpha}.\sigma+\sigma'
     	\der \rho
    	  \complyF
		\sigma
\justifies
    \Gamma\der \alpha.\rho+\rho'\complyF\Dual{\alpha}.\sigma+\sigma'
\using(+,+)
\endprooftree 
\end{array}\]
\vspace{2mm}
\[\begin{array}{c}
\prooftree 
    \forall i\in I.~ \Gamma,\mbox{\small $\bigoplus$}_{i\in I} \Dual{a}_i.{\rho}_i\complyF
    	\mbox{\small $\sum$}_{j\in I\cup J} a_j.{\sigma}_j\der 
    	\rho_i
    	\complyF
    		\sigma_i
  \justifies
    \Gamma\der \mbox{\small $\bigoplus$}_{i\in I} \Dual{a}_i.{\rho}_i\complyF
    	\mbox{\small $\sum$}_{j\in I\cup J} a_j.{\sigma}_j
\using (\oplus, +)
\endprooftree 
\\[10mm]
\prooftree 
    \forall i\in I.~ \Gamma,\mbox{\small $\sum$}_{j\in I\cup J} a_j.{\sigma}_j\complyF \mbox{\small $\bigoplus$}_{i\in I} \Dual{a}_i.{\rho}_i \der 
    	\rho_i
    	\complyF
    		\sigma_i
  \justifies
    \Gamma\der\mbox{\small $\sum$}_{j\in I\cup J} a_j.{\sigma}_j\complyF \mbox{\small $\bigoplus$}_{i\in I} \Dual{a}_i.{\rho}_i    	
\using (+,\oplus)
\endprooftree 
\end{array}\]
\end{definition}
\noindent
Notice that rule $(+,+)$ implicitly represents the fact that, in the decision procedure for two contracts made of retractable choices, 
the possible synchronising branches have to be tried, until either a successful one is found or all fail.

\begin{example}
{\em
Let us formally show that, for the 
$\textsf{Buyer}'$ and $\textsf{Seller}$ of the Introduction, we have $\textsf{Buyer}'\complyF\textsf{Seller}$.\\
For the sake of readability, let\\
$\Gamma' = \textsf{Buyer}'  \complyF \textsf{Seller},\ \Price.(\DCard \oplus \DCash) \complyF \DPrice.(\Card + \Cash)$ and
 $\Gamma''= \Gamma' ,\ \DCard \oplus \DCash \complyF \Card + \Cash$

{\small
$$
\prooftree
 \prooftree
      \prooftree
             \prooftree 
             \justifies
                    \Gamma'' \der \stopA \complyF \stopA
             \using(\CkptcomplAx)
             \endprooftree
\qquad
             \prooftree 
             \justifies
                    \Gamma'' \der \stopA \complyF \stopA
             \using(\CkptcomplAx)
             \endprooftree
      \justifies
            \Gamma'  \der \DCard \oplus \DCash \complyF \Card + \Cash
       \using(\oplus,+)
       \endprooftree
   \justifies
       \textsf{Buyer}'  \complyF \textsf{Seller} \der \Price.(\DCard \oplus \DCash) \complyF \DPrice.(\Card + \Cash)
    \using(+,+)
   \endprooftree
\justifies
     \der \textsf{Buyer}'  \complyF \textsf{Seller} 
\using(+,+)
\endprooftree
$$
}
}
\end{example}

\begin{example}
{\em
The contracts of Example~\ref{exampleRed2} can be formally proved to be compliant by means
of the following derivation in our formal system. Actually such a derivation can be looked at as
the result of the decision procedure implicitly described by the formal system.
$$
\prooftree
 \prooftree
   \justifies
      \Dual{b}.\rho + \Dual{a}.c.\rho \complyF  b.\sigma + a.\Dual{e}.\sigma \der \rho \complyF \sigma
   \using(\CkptcomplHyp)
   \endprooftree
\justifies
     \der \Dual{b}.\rho + \Dual{a}.c.\rho \complyF  b.\sigma + a.\Dual{e}.\sigma
\using(+,+)
\endprooftree
$$
In applying the rules we exploit the fact that we consider contracts modulo recursion fold/unfold. 
}\end{example}

We can show that derivability in 
this formal system is decidable, 
since it is syntax 
directed and it does not admit infinite derivations.\\

We denote by ${\cal D}$ a derivation in the system of Definition~\ref{def:formalCompl}.
The procedure \Prove\; in Figure~\ref{fig:Prove} clearly implements the formal system,
that is it is straightforward to check the following
\begin{fact}\label{fact:provecorr}\hfill
\begin{enumerate}[i)]
\item\vspace{-2mm}~~~~
{\em \Prove}$(\Gamma\der \rho\complyF\sigma)\neq$ {\em \FAIL}~~~~\Iff~~~~$\Gamma\der \rho\complyF\sigma$.
\item\vspace{-3mm}~~~~
{\em \Prove}$(\Gamma\der \rho\complyF\sigma)={\cal D}\neq$ {\em \FAIL}$~~~~\implies~~~~\begin{array}{c}{\cal D}\\ \Gamma\der \rho\complyF\sigma\end{array}$
\end{enumerate}
\end{fact}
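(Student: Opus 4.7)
The plan is to establish both parts of the fact by exhibiting a tight structural correspondence between the steps of the procedure \Prove\ in Figure~\ref{fig:Prove} and the rules of the formal system in Definition~\ref{def:formalCompl}. The crucial observation is that the rules are syntax-directed: given a pair $(\rho,\sigma)$, the outermost shape of $\rho$ and $\sigma$ (together with whether $\rho\complyF\sigma$ already occurs in $\Gamma$ or whether $\rho=\stopA$) uniquely selects at most one non-axiomatic rule. This must be mirrored by a case analysis in \Prove, whose branches either directly return an axiom, discharge a hypothesis, or recursively invoke \Prove\ on the premises prescribed by the selected rule.

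For part (i), I would proceed by induction in both directions. For the implication from derivability to non-failure, given a derivation $\mathcal{D}$ of $\Gamma\der\rho\complyF\sigma$, I would induct on the height of $\mathcal{D}$: at the root either rule $\CkptcomplAx$ or $\CkptcomplHyp$ applies (in which case \Prove\ succeeds immediately by inspection of its base cases), or one of $(+,+)$, $(\oplus,+)$, $(+,\oplus)$ applies, in which case \Prove\ identifies the same case from the shape of $(\rho,\sigma)$ and recursively calls itself on exactly the sequents occurring as premises of $\mathcal{D}$; by the induction hypothesis none of these recursive calls returns \FAIL, so neither does the outer call. For the reverse implication, I would induct on the recursion structure of \Prove: whenever \Prove\ returns a non-\FAIL\ value, the case analysis it performed matches a rule of Definition~\ref{def:formalCompl}, and the recursive subcalls all returned non-\FAIL\ values, hence by induction the corresponding sub-derivations exist and can be combined by the selected rule to yield $\Gamma\der\rho\complyF\sigma$.

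Part (ii) is then essentially a refinement of the second induction of part (i): the derivation $\mathcal{D}$ returned by \Prove\ is constructed explicitly by pasting together the sub-derivations produced by the recursive calls under the rule identified by the case analysis. Formally one reads off $\mathcal{D}$ from the call tree of the successful run of \Prove, with axioms at the leaves produced by the base cases and every internal node labelled by the rule whose premises were recursively verified.

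The only subtle point — and the one I would expect to write out carefully — is termination of \Prove, needed implicitly for part (i) to make sense. Termination hinges on two facts. First, contracts are taken modulo equi-recursion but possess only finitely many syntactically distinct subterms up to unfolding (the set $\subterms{\rho}\times\subterms{\sigma}$ is finite by contractiveness), so only finitely many sequents $\rho'\complyF\sigma'$ can ever appear in $\Gamma$. Second, every recursive call of \Prove\ either strictly extends $\Gamma$ or immediately succeeds via $\CkptcomplHyp$: indeed, a sequent that would merely repeat one already in $\Gamma$ is caught by the hypothesis case, preventing infinite unfolding. Together these guarantee that the recursion has bounded depth, so \Prove\ halts on every input, which in turn is what licenses the inductive arguments above and delivers decidability.
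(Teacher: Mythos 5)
Your proposal is correct and is essentially the argument the paper has in mind: the paper dismisses this Fact as ``straightforward to check,'' and your two inductions (on derivation height for one direction, on the call tree of \Prove\ for the other, with part (ii) read off from the successful call tree) are exactly that routine check, including the observation that the search over $k\in I\cap J$ in the $(+,+)$ case is matched by the single-premise rule. The only organizational difference is that you fold the termination argument into the Fact, whereas the paper states the Fact for terminating runs and proves termination separately in Theorem~\ref{the:a} using the same two observations you make (finitely many subexpressions of regular trees, and strict growth of $\Gamma$ along every call branch with repeats caught by $(\CkptcomplHyp)$).
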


\begin{theorem}\label{the:a}
Derivability in the formal system is decidable.
\end{theorem}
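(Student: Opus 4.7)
The plan is to combine Fact~\ref{fact:provecorr} with a proof that the search procedure \Prove{} terminates on every input, so one can decide $\der \rho \complyF \sigma$ by simply running \Prove{}. This requires two ingredients: syntax-directedness of the rules (to bound the nondeterminism of proof search) and an a~priori finite bound on the size of any possible derivation (to guarantee termination of the search).

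First I would verify that the system is syntax directed. Given a goal $\Gamma \der \rho \complyF \sigma$, the applicable rule is essentially forced by the shape of $\rho$ and $\sigma$: rule (Ax) is applicable exactly when $\rho = \stopA$, rule (Hyp) is applicable exactly when $\rho \complyF \sigma \in \Gamma$, and among the remaining rules $(+,+)$, $(\oplus,+)$ and $(+,\oplus)$ the choice is determined by whether the top-level connectives of $\rho$ and $\sigma$ are retractable sums or unretractable internal choices. Thus proof search has no genuine backtracking: \Prove{} only needs to try (Ax) and (Hyp) first as closing moves, then apply the unique applicable structural rule (if any) and recur on its premises.

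Termination of proof search would then be established as follows. Working in the equi-recursive view, one associates to each contract $\sigma$ its finite set of subterms $\subterms{\sigma}$, obtained by keeping $\rec$-binders intact and treating $\rec x.\tau = \tau[\rec x.\tau / x]$ as a notational expansion only when a head action must be exposed. Standard regular-tree arguments show that $\subterms{\sigma}$ is finite. Now observe that every structural rule, read bottom-up, replaces a judgment $\rho' \complyF \sigma'$ by premises of the form $\rho'' \complyF \sigma''$ with $\rho'' \in \subterms{\rho'}$ and $\sigma'' \in \subterms{\sigma'}$. Hence along any branch of a putative derivation starting from $\der \rho \complyF \sigma$, every judgment appearing belongs to the finite set $\subterms{\rho} \times \subterms{\sigma}$. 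Since each application of a non-closing rule strictly enlarges $\Gamma$ by adjoining the current conclusion, and $\Gamma$ is drawn from this finite set, after at most $|\subterms{\rho}|\cdot|\subterms{\sigma}|$ steps the search on every branch must either close via (Ax) or (Hyp) or fail because no rule applies. Finite branching (each structural rule has as many premises as there are summands, which is finite by Definition~\ref{Adef:ckpt-behav}) and König's lemma then imply that the entire search tree produced by \Prove{} is finite, so the procedure always terminates; decidability follows from Fact~\ref{fact:provecorr}.

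The main obstacle is the rigorous treatment of the equi-recursive identification $\rec x.\sigma = \sigma[\rec x.\sigma/x]$, since a naive definition of ``subterm'' that blindly unfolds recursion would give an infinite $\subterms{\sigma}$ and the whole argument would collapse. I would side-step this by fixing a canonical folded representative for each contract and defining $\subterms{\cdot}$ on the associated regular syntax tree, exactly as done for recursive session types in the literature; this makes $\subterms{\sigma}$ finite and closed under the immediate-subterm operation used by the rules, which is precisely what the termination proof needs.
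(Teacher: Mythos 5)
Your proposal is correct and follows essentially the same route as the paper: reduce decidability to termination of \Prove{} via Fact~\ref{fact:provecorr}, then argue that recursive calls only ever involve the finitely many subexpressions of the original contracts (finite because contracts denote regular trees), so that the strictly growing context $\Gamma$, drawn from a finite set, bounds the depth of every branch. Your write-up merely makes explicit two points the paper leaves implicit --- that $\Gamma$ strictly increases because (Hyp) is checked before any structural rule, and that finite branching plus bounded depth yields a finite search tree.
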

\begin{proof}
 By Fact \ref{fact:provecorr}, we only need to show that the procedure \Prove\; always terminates. Notice that
in all recursive calls \Prove$(\Gamma, \rho\complyF\sigma \!\der\! \rho_k\complyF\sigma_k)$
inside
\Prove$(\Gamma\der \rho\complyF\sigma)$ the expressions $\rho_k$ and  $\sigma_k$ are subexpressions of $\rho$ and $\sigma$
respectively
(because of unfolding of recursion they can also be $\rho$ and $\sigma$). 
Since contract expressions generate regular trees, there are only finitely many such subexpressions. This implies that the number of different calls of procedure \Prove\ is always finite. 
\end{proof}
\begin{figure}[ht]
  
  \centering
{\small
\begin{tabbing}
\Prove\=$(\Gamma\der \rho\complyF\sigma)$ \\ [1mm]
\IF \> $\rho = \stopA$~~ \THEN ~~~~~
	$\prooftree 
	\justifies
		\Gamma\der \stopA \complyF \sigma
	\using(\CkptcomplAx)
	\endprooftree $\\ [1mm]
\ELSE \> \IF \= ~~~$\rho\complyF\sigma \in \Gamma$ ~~\THEN~~~~~
	$\prooftree
	\justifies
		\Gamma, \rho\complyF\sigma \der \rho\complyF\sigma 
	\using(\CkptcomplHyp)
	\endprooftree $ \\ [1mm]
\ELSE \> \IF ~~\=~~ \= $\rho = \sum_{i\in I}\alpha_i.\rho_i$ ~\= \AND~$\sigma = \sum_{j\in J}\Dual{\alpha}_j.\sigma_j$ \\[2mm]
\> \> \> \AND~~~{\bf exists} $k \in I\cap J$ {\bf s.t.} ${\cal D} = $ \Prove$(\Gamma, \rho\complyF\sigma \der \rho_k\complyF\sigma_k) \neq \FAIL$ \\ [2mm]
\>  \THEN  ~~~~~
	$\prooftree  
     		{\cal D}
	\justifies
    		\Gamma\der \rho\complyF\sigma
	\using(+,+)
	\endprooftree$ ~~~~\ELSE ~~\FAIL\\ [2mm]
\ELSE \> \IF \> ~~~$\rho = \bigoplus_{i\in I}\Dual{a}_i.\rho_i$ ~\AND~ $\sigma = \sum_{j\in J}a_j.\sigma_j$~\AND~ $I\subseteq J$\\[2mm]
\>\>\> \AND~ \FA~$k\in I$ ~~${\cal D}_k = $~\Prove$(\Gamma, \rho\complyF\sigma \der \rho_k\complyF\sigma_k) \neq \FAIL$\\ [2mm]
\>  \THEN ~~~~~
	$\prooftree 
    		\forall k\in I ~~{\cal D}_k 
 	 \justifies
    		\Gamma\der \rho\complyF\sigma
	\using (\oplus, +)
	\endprooftree $ \\ [1mm]
\ELSE \> \IF \> ~~~$\rho = \sum_{i\in I}a_i.\rho_i$ ~\AND~ $\sigma = \bigoplus_{j\in J}\Dual{a}_j.\sigma_j$~\AND~ $I\supseteq J$\\ [2mm]
\>\>\> \AND~~\FA~$k\in J$~~~${\cal D}_k = $~\Prove$(\Gamma, \rho\complyF\sigma \der \rho_k\complyF\sigma_k)\neq \FAIL$\\ [2mm]
\>  \THEN ~~~~~
	$\prooftree 
    		\forall k\in J ~~{\cal D}_k 
 	 \justifies
    		\Gamma\der \rho\complyF\sigma
	\using (+,\oplus)
	\endprooftree $  ~~~~\ELSE ~~\FAIL\\ [1mm]
\ELSE \> \FAIL
\end{tabbing}
\vspace{-4mm}
}\caption{The procedure\;\Prove.}\label{fig:Prove}
\end{figure}

In the remaining of this section we will show the soundness and the completeness of the formal system using some auxiliary lemmas.  

\paragraph{Soundness}
\label{sect:easychair-requirements}
It is useful to show that if a configuration is stuck, then both histories are empty. This is a consequence of the fact that the property ``the histories of client and server have the same length'' is preserved by reductions. 

\begin{lemma}\label{lem:stack-len}If
	$\np{\vec{\delta}}{\rho'} \pp \np{\vec{\gamma}}{\sigma'} \not\!\!\ored{}$, then	 $\vec{\delta} = \vec{\gamma} = \emptystack$.
\end{lemma}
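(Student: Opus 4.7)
The plan is a length-invariant argument followed by an appeal to rule RbkRule. Define the length $|\vec\gamma|$ of a history by $|\emptystack|=0$ and $|\vec\gamma\cons\sigma|=|\vec\gamma|+1$, and first prove the invariant $|\vec\delta|=|\vec\gamma|$ along any reduction sequence in Definition~\ref{def:cspairlts}. The verification is a direct case analysis on the rules of Definition~\ref{scs}: under CommRule each side fires either $(+)$ (pushing the discharged summand) or $(\alpha)$ (pushing $\circ$), so both histories grow by one; under $(\tau)$ only one side advances via $(\oplus)$, which leaves the history unchanged, while the partner is untouched; under RbkRule both sides use $(\rb)$, each popping a single element. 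Since initial configurations have two empty histories, the invariant propagates to every reachable pair.

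For the lemma itself I would proceed by contraposition. Suppose $\np{\vec\delta}{\rho'}\pp\np{\vec\gamma}{\sigma'}\not\!\!\ored{}$ and assume, for a contradiction, that $\vec\delta\neq\emptystack$. By the invariant, $\vec\gamma\neq\emptystack$ as well, so rule $(\rb)$ supplies a $\lts{\rb}$-transition for each component. The stuckness assumption rules out CommRule and $(\tau)$, and these are precisely the conditions that, together with $\rho'\neq\stopA$, block RbkRule. Hence RbkRule would fire, contradicting $\not\!\!\ored{}$; so $\vec\delta=\vec\gamma=\emptystack$.

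The step I expect to be most delicate is the side-condition $\rho\neq\stopA$ in RbkRule. If $\rho'=\stopA$ but the histories are non-empty, the argument above does not yield a contradiction, since a satisfied client cannot participate in a synchronous rollback on its own. To close this gap I would need an auxiliary observation — for instance, a structural property of the configurations that actually matter in the subsequent soundness argument ruling out such ``trailing'' non-empty histories under $\stopA$, or a refinement of the statement restricting attention to stuck configurations that are not yet successful. The length-invariance calculation and the RbkRule step are routine; the real care goes into treating this residual corner cleanly.
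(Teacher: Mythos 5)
Your argument is essentially the paper's: the same length invariant (each \CommRule{} step pushes one element on both histories, $(\tau)$ touches neither, \RbkRule{} pops one from each), followed by the observation that two non-empty histories would let \RbkRule{} fire in a configuration where the forward rules are blocked. The paper's proof is exactly this, run in the other order: it first asserts (``Clearly\dots'') that a stuck pair has at least one empty history, then uses the invariant to conclude both are empty.

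The corner case you flag is not a defect of your write-up but a genuine defect of the lemma as stated: it is falsified by the paper's own Example~\ref{example1}, whose final configuration $\np{\circ\cons\circ\cons\circ}{\stopA}\pp\np{\circ\cons\circ\cons\Cash}{\stopA}\not\!\!\ored{}$ is stuck with two non-empty histories precisely because the side condition $\rho\neq\stopA$ disables \RbkRule{}. The paper's ``Clearly'' silently assumes $\rho'\neq\stopA$. The repair is the one you anticipate: in the soundness proof the lemma is only ever invoked on stuck configurations witnessing a failure of compliance, i.e.\ with $\rho'\neq\stopA$, so the statement should carry that hypothesis, and under it your contrapositive step closes with no further auxiliary observation needed. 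One more implicit assumption you share with the paper and should make explicit: the invariant argument requires the configuration to be reachable from a pair with empty histories (an arbitrary pair such as $\np{\emptystack}{a}\pp\np{\vec{\gamma}}{b}$ with $\vec{\gamma}\neq\emptystack$ is stuck even though $a\neq\stopA$); the lemma is really about reachable stuck configurations, which is how it is used.
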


\begin{proof} 
Clearly $\np{\vec{\delta}}{\rho'} \pp \np{\vec{\gamma}}{\sigma'} \not\!\!\ored{}$ implies either $\vec{\delta} =\emptystack$ or $\vec{\gamma} = \emptystack$.
Observe that:
\begin{itemize}
\item rule  $(\CommRule)$ adds one element to both stacks;
\item rule  $(\tau)$ does not modify both stacks;
\item rule  $(\RbkRule)$ removes one element from both stacks.
\end{itemize}
Then starting from two stacks containing the same number of elements, the reduction always produces two stacks containing the same number of elements. So $\vec{\delta} =\emptystack$ implies $\vec{\gamma} = \emptystack$ and vice versa.
\end{proof}

Next we state a lemma relating the logical rules of the formal system for compliance to the reduction rules; notice that the formers do not mention stacks in their judgments.

\begin{lemma}\label{lem:computation}
If the following is an instance of rule $(+,+)$, or $(\oplus,+)$, or $(+,\oplus)$:
\[
\prooftree
	\Gamma, \rho \complyF \sigma \der \rho_1 \complyF \sigma_1 \quad \cdots \quad
	\Gamma, \rho \complyF \sigma \der \rho_n \complyF \sigma_n
\justifies
	\Gamma \der \rho \complyF \sigma
\endprooftree
\]
then for all $\vec{\delta},\vec{\gamma}$ and for all $i = 1, \ldots, n$ there exist $\vec{\delta}_i,\vec{\gamma}_i$ such that
\[ \np{\vec{\delta}}{\rho} \pp \np{\vec{\gamma}}{\sigma} ~ \ored{*} ~ \np{\vec{\delta}_i}{\rho_i} \pp \np{\vec{\gamma}_i}{\sigma_i}\]
and 
rule {\em $(\RbkRule)$} is not used, namely no rollback occurs. 
\end{lemma}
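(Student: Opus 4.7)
The plan is to proceed by case analysis on which of the three logical rules $(+,+)$, $(\oplus,+)$, $(+,\oplus)$ is the displayed instance, and in each case to exhibit an explicit short reduction sequence from $\np{\vec{\delta}}{\rho} \pp \np{\vec{\gamma}}{\sigma}$ to the required $\np{\vec{\delta}_i}{\rho_i} \pp \np{\vec{\gamma}_i}{\sigma_i}$, inspecting each step to verify that it is an instance of either $(\CommRule)$ or $(\tau)$ and never of $(\RbkRule)$.

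For the case $(+,+)$, we have $\rho = \alpha.\rho_1 + \rho''$ and $\sigma = \Dual{\alpha}.\sigma_1 + \sigma''$ (renaming the rule's metavariables to avoid clashes), and there is a single premise with $\rho_1$ and $\sigma_1$ on the right. The contract LTS rule $(+)$ gives $\np{\vec{\delta}}{\alpha.\rho_1 + \rho''} \ored{\alpha} \np{\vec{\delta}\cons\rho''}{\rho_1}$ and, symmetrically, $\np{\vec{\gamma}}{\Dual{\alpha}.\sigma_1 + \sigma''} \ored{\Dual{\alpha}} \np{\vec{\gamma}\cons\sigma''}{\sigma_1}$, so one application of $(\CommRule)$ yields the desired configuration with $\vec{\delta}_1 = \vec{\delta}\cons\rho''$ and $\vec{\gamma}_1 = \vec{\gamma}\cons\sigma''$.

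For the case $(\oplus,+)$, with $\rho = \bigoplus_{i\in I} \Dual{a}_i.\rho_i$ and $\sigma = \sum_{j\in I\cup J} a_j.\sigma_j$, fix $k \in I$. First, by iterating the contract rule $(\oplus)$ we obtain $\np{\vec{\delta}}{\bigoplus_{i\in I} \Dual{a}_i.\rho_i} \ored{\tau} \np{\vec{\delta}}{\Dual{a}_k.\rho_k}$, so rule $(\tau)$ of the client/server LTS gives a step leaving the server unchanged. Then rule $(\alpha)$ gives $\np{\vec{\delta}}{\Dual{a}_k.\rho_k} \ored{\Dual{a}_k} \np{\vec{\delta}\cons\circ}{\rho_k}$, while writing $\sigma$ as $a_k.\sigma_k + \sigma''$ (with $\sigma''$ collecting the remaining summands) and applying $(+)$ gives $\np{\vec{\gamma}}{\sigma} \ored{a_k} \np{\vec{\gamma}\cons\sigma''}{\sigma_k}$. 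A single $(\CommRule)$ step then produces the required configuration. The case $(+,\oplus)$ is entirely symmetric. In every step, either $(\CommRule)$ or $(\tau)$ applies, so the side-condition of $(\RbkRule)$ forbids its use; equivalently, we have simply never invoked $(\RbkRule)$ in our derivation.

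The main obstacle is essentially bookkeeping: one must be careful that the general retractable-sum rule $(+)$ and the general internal-choice rule $(\oplus)$ (stated in binary form in the excerpt) can be used to pick out an arbitrary branch of an $n$-ary choice by grouping the remaining branches into a single residual $\rho''$ or $\sigma''$, so that the resulting histories $\vec{\delta}_i$ and $\vec{\gamma}_i$ are well-defined. Once this is observed, each case reduces to writing down at most two LTS steps and reading off which of $(\CommRule)$ and $(\tau)$ justifies each one.
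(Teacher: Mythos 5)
Your proof is correct and takes the same route as the paper, whose entire proof is ``By inspection of the deduction and reduction rules''; you simply carry out that inspection explicitly, matching each logical rule to one $(\tau)$ step (for the $\oplus$ side) and one $(\CommRule)$ step, with the grouping of residual branches handling the $n$-ary choices. The only cosmetic remark is that selecting the $k$-th branch of an internal choice needs a single application of rule $(\oplus)$ (after regrouping), not an iteration, and that unary sums fall under rule $(\alpha)$ with $\circ$ pushed on the history --- neither point affects the conclusion, since the lemma only asserts the existence of suitable histories.
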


\begin{proof} By inspection of the deduction and reduction rules. 
\end{proof}

\begin{theorem}[Soundness]
If $\der\rho\complyF \sigma$, then $\rho\complyR\sigma$.
\end{theorem}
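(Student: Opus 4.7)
The plan is to induct on the derivation $\mathcal{D}$ of $\vdash \rho \complyF \sigma$, which is finite by Theorem~\ref{the:a}. Fix a maximal reduction $\np{\emptystack}{\rho} \pp \np{\emptystack}{\sigma} \ored{*} \np{\vec{\delta}'}{\rho'} \pp \np{\vec{\gamma}'}{\sigma'} \not\!\!\ored{}$; by Lemma~\ref{lem:stack-len} both final stacks are empty, so it suffices to show $\rho' = \stopA$. In the base case $(\CkptcomplAx)$, $\rho = \stopA$: no forward client-side action exists, the server can only perform $(\tau)$-steps, and at a stuck configuration even those cannot apply, so the client remains $\stopA$ throughout and the claim holds.

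For the inductive cases $(\oplus,+)$ and $(+,\oplus)$, the premises demand compliance of every branch of the unretractable choice, so whichever branch the initial $(\tau)$-step selects, the induction hypothesis on the corresponding sub-derivation delivers $\rho' = \stopA$. For $(+,+)$, Lemma~\ref{lem:computation} exhibits a forward reduction reaching the premise configuration $\rho_k \pp \sigma_k$ on which compliance is witnessed. The real reduction may try other matching pairs first, but each such misstep eventually deadlocks, triggers Rule $(\RbkRule)$, pops one element from each stack, and permanently discards the failed alternative. Since $(+,+)$ has finitely many branches, the reduction either eventually enters the distinguished branch predicted by the derivation — at which point the induction hypothesis concludes — or diverges, in which case the compliance implication is vacuously true.

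The rule $(\CkptcomplHyp)$ never closes the derivation at the root, since the ambient context is empty; it arises only inside $\mathcal{D}$. To handle it I would strengthen the inductive statement to arbitrary contexts $\Gamma$ of judgments already committed to by enclosing applications of $(+,+)$, $(\oplus,+)$, or $(+,\oplus)$, and perform a lexicographic induction on the structural depth of $\mathcal{D}$ and on the number of forward steps between successive rollbacks. The main obstacle is precisely this interplay of backtracking with the cyclic hypothesis rule: a finite derivation encodes a potentially infinite, backtracking computation, and one must show that every rollback either drives the reduction strictly toward the branch witnessed by the derivation or contributes to an infinite trace that vacuously satisfies the compliance condition.
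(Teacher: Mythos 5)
Your overall strategy --- a direct induction on the finite derivation $\mathcal{D}$ of $\der\rho\complyF\sigma$ --- is genuinely different from the paper's, which argues by contraposition: it assumes a finite reduction $\np{\emptystack}{\rho}\pp\np{\emptystack}{\sigma}\ored{*}\np{\emptystack}{\rho'}\pp\np{\emptystack}{\sigma'}\not\!\!\ored{}$ with $\rho'\neq\stopA$ and inducts on its \emph{length}, extracting strictly shorter failing reductions for the premises of the last rule. That choice is not cosmetic: it is precisely what lets the paper dispose of the circular rule $(\CkptcomplHyp)$, via Lemma~\ref{lem:computation} --- if a derivation of the root judgment used $(\CkptcomplHyp)$, there would be a forward-only reduction looping back to the root configuration, which is then argued to be compliant, contradicting the assumed failure. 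Your proposal leaves exactly this case open. You acknowledge that $(\CkptcomplHyp)$ ``arises only inside $\mathcal{D}$'' and propose to ``strengthen the inductive statement to arbitrary contexts $\Gamma$'' with a lexicographic measure, but you never say what the semantic reading of $\Gamma\der\rho\complyF\sigma$ should be, nor why any component of your measure decreases when the computation loops back to a configuration whose judgment sits in $\Gamma$. The system is essentially circular/coinductive: a finite derivation denotes a computation that can return to the very configuration at its root, and no measure on the derivation tree decreases across such a loop. Since the entire difficulty of soundness for this system lives in that case, the proof is not complete; your closing sentence (``one must show that every rollback either drives the reduction strictly toward the branch witnessed by the derivation or \dots'') is a restatement of the obligation rather than a discharge of it.

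Two smaller points. First, in the $(+,+)$ and $(\oplus,+)$ cases your induction hypothesis yields $\rho_k\complyR\sigma_k$, i.e.\ compliance \emph{from empty histories}, whereas the actual reduction reaches $\rho_k\pp\sigma_k$ with the discarded alternatives still sitting at the bottom of both stacks; to transfer compliance you need something like Lemma~\ref{lem:hypSoundness} (compliance is preserved when histories are extended on the left), which you never invoke. Second, your claim for $(+,+)$ that ``each misstep eventually deadlocks, triggers Rule~$(\RbkRule)$, pops one element from each stack, and permanently discards the failed alternative'' is too quick: a bad branch may itself push and pop many frames and revisit configurations before (if ever) being popped, so ``the reduction eventually enters the distinguished branch or diverges'' needs an argument --- one that again collides with the looping behaviour that $(\CkptcomplHyp)$ encodes. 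The paper's contrapositive set-up avoids all of this because divergent runs never produce a counterexample to $\complyR$ and hence never need to be analysed.
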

\begin{proof}
The proof is by contradiction. Assume $\rho\ncomplyR\sigma$. Then
there is a reduction \[\np{\emptystack}{\rho} \pp
\np{\emptystack}{\sigma} \ored{*} \np{\emptystack}{\rho'} \pp
\np{\emptystack}{\sigma'} \not\!\!\ored{}\] with $\rho' \neq \stopA$. Note
that both the histories are empty thanks to
Lemma~\ref{lem:stack-len}. We proceed by induction on the number $n$
of steps in the reduction. 

Let us consider the base case ($n=0$). In this case $\rho \neq \stopA$
and there is no possible synchronization. Rule $\CkptcomplAx$ is not
applicable since $\rho \neq \stopA$. Rule $\CkptcomplHyp$ is not
applicable since $\Gamma$ is empty. The other rules are not applicable otherwise we would have a possible synchronization.

Let us consider the inductive case. We have a case analysis on the
topmost operators in $\rho$ and $\sigma$. Let us start with the case
where both topmost operators are retractable sums, i.e., $\rho =
a.\rho_k + \rho''$ and $\sigma = \Dual{a}.\sigma_k + \sigma''$. Thus,
\[\np{\emptystack}{a.\rho_k + \rho''} \pp
\np{\emptystack}{\Dual{a}.\sigma_k + \sigma''} \ored{*}
\np{\emptystack}{\rho''} \pp \np{\emptystack}{\sigma''} \ored{*}
\np{\emptystack}{\rho'} \pp \np{\emptystack}{\sigma'} \not\ored{}.\] By
definition $\rho'' \ncomplyR \sigma''$, and since this requires a reduction of length $ < n$, by inductive hypothesis
$\nder\rho'' \complyF \sigma''$. Also the above reduction begins by:
\[\np{\emptystack}{a.\rho_k + \rho''} \pp
\np{\emptystack}{\Dual{a}.\sigma_k + \sigma''} \ored{}
\np{\rho''}{\rho_k} \pp \np{\sigma''}{\sigma_k}\ored{*}
\np{\rho''}{\rho'_k} \pp \np{\sigma''}{\sigma'_k} \ored{}
\np{\emptystack}{\rho''} \pp \np{\emptystack}{\sigma''} 
\] 
 for some $\rho'_k, \sigma'_k$. This implies, by the conditions on rule $(\RbkRule)$ and by Lemma \ref{lem:stack-len}:
\[\np{\emptystack}{\rho_k} \pp
\np{\emptystack}{\sigma_k} \ored{*} \np{\emptystack}{\rho'_k} \pp
\np{\emptystack}{\sigma'_k} \not\ored{}.\] 
It follows that $\rho_k
\ncomplyR \sigma_k$, and by inductive hypothesis $\nder\rho_k \complyF
\sigma_k$. We now claim that:
\[
\nder\rho'' \complyF \sigma'' ~\text{and}~\nder\rho_k \complyF 
\sigma_k ~~ \text{imply} ~~  \nder a.\rho_k + \rho''  \complyF \Dual{a}.\sigma_k + \sigma''
\]
Toward a contradiction let us assume that $\der a.\rho_k + \rho''  \complyF \Dual{a}.\sigma_k + \sigma''$; then the only applicable rule is
$(+,+)$, which requires
both 
\[a.\rho_k + \rho''  \complyF \Dual{a}.\sigma_k + \sigma'' \der  \rho''  \complyF \sigma''
~~\mbox{and}~~
a.\rho_k + \rho''  \complyF \Dual{a}.\sigma_k + \sigma'' \der \rho_k  \complyF \sigma_k.\]
Because the only difference between these statements and $\der\rho'' \complyF \sigma''$
and $\der\rho_k \complyF \sigma_k$ is the assumption 
$a.\rho_k + \rho''  \complyF \Dual{a}.\sigma_k + \sigma''$, which is used only in rule $\CkptcomplHyp$,
there must be at least one branch of the derivation tree of $\der a.\rho_k + \rho''  \complyF \Dual{a}.\sigma_k + \sigma''$
ending by such a rule. By Lemma \ref{lem:computation} this implies that 
$\np{\emptystack}{a.\rho_k + \rho''} \pp
\np{\emptystack}{\Dual{a}.\sigma_k + \sigma''} \ored{*} \np{\emptystack}{a.\rho_k + \rho''} \pp
\np{\emptystack}{\Dual{a}.\sigma_k + \sigma''}$ by a reduction never using rule $(\RbkRule)$. By definition
this implies that $\np{\emptystack}{a.\rho_k + \rho''} \complyR \np{\emptystack}{\Dual{a}.\sigma_k + \sigma''}$, contradicting the
hypothesis.

All other cases are similar.
%
%
\end{proof}

\Comment{
In the following we deal with {\em reduction trees}, i.e. with trees of reduction steps on client/server pairs.
\begin{definition}
A reduction tree is {\em $(+.+)$-complete} if any node with more than one child is of the form
 $\mbox{\small $\sum$}_{i\in I} \Dual{\alpha}_i.{\rho}_i \pp
\mbox{\small $\sum$}_{j\in J} \alpha_j.{\sigma}_j$ and such that the set of
its children is $\{ {\rho}_k \!\pp\! \alpha_k \mid k\in (I\cup J) \}.$
\end{definition}
\begin{lemma}
\label{lem:sslemma}
Let $$\np{\emptystack}\rho\pp \np{\emptystack}\sigma
                    \ored{*}
                     \np{\vec{\delta}}\rho'\pp \np{\vec{\gamma}}\sigma'
                     \not\!\!\ored{} $$
with $\rho'\neq\stopA$
Then there exists a $(+.+)$-complete reduction tree with root $\np{\emptystack}\rho\pp \np{\emptystack}\sigma$ containing no $\ored{\RbkRule}$ reduction step and such that for any leaf $\np{\vec{\delta'}}\rho''\pp \np{\vec{\gamma''}}\sigma''$ we have
$\rho''\neq\stopA$ and $\np{\vec{\delta'}}\rho''\pp \np{\vec{\gamma''}}\sigma''\ \ \not\!\!\!\!\ored{\CommRule,\tau}$.
\end{lemma}
\Comment{
\begin{proof}
By induction on the number $n$ of $\ored{\RbkRule}$ reduction steps in
$\np{\emptystack}\rho\pp \np{\emptystack}\sigma
                    \ored{*}
                     \np{\vec{\delta}}\rho'\pp \np{\vec{\gamma}}\sigma'
                     \not\!\!\ored{} $.\\
Let then $n=1$ with\\
$\np{\emptystack}\rho\pp \np{\emptystack}\sigma
                    \ored{*}
                     \np{\vec{\delta_1}\cons\rho'_1}\rho_1\pp \np{\vec{\gamma_1}\cons\sigma'_1}\sigma_1
                     \ored{\RbkRule}
                     \np{\vec{\delta_1}}\rho'_1\pp \np{\vec{\gamma_1}}\sigma'_1
                     \ored{*}
                     \np{\vec{\delta}}\rho'\pp \np{\vec{\gamma}}\sigma'
                     \not\!\!\!\ored{} $.\\
Since we start from an empty stack, let us then consider the $\ored{\CommRule}$ reduction where $\rho'_1$ and $\sigma'_1$ have been pushed on the stack.\\
$\np{\vec{\delta_1}}\rho_2\pp \np{\vec{\gamma_1}}\sigma_2
                    \ored{\CommRule}
                     \np{\vec{\delta_1}\cons\rho'_1}\rho'_2\pp \np{\vec{\gamma_1}\cons\sigma'_1}\sigma'_2$.\\
We have now to take into account several cases (we shall ignore the symmetric ones):
\begin{description}
\item $\rho_2 = \Dual{a}.\rho'_2$, $\sigma_2 = a.\sigma'_2$

\end{description}

\end{proof}
}

\begin{proposition}[Soundness]
For any judgment $\der \rho\complyF\sigma$:
\[\der\rho\complyF \sigma~~ \Impl ~~ \rho\complyR\sigma.\]
\end{proposition}
\begin{proof}
By contraposition, let $\not\models \rho\complyF\sigma$.
This means that there exists a reduction sequence such that
$$\np{\emptystack}\rho\pp \np{\emptystack}\sigma
                    \ored{*}
                     \np{\vec{\delta'}}\rho'\pp \np{\vec{\gamma'}}\sigma'
                     \not\!\!\ored{} $$
with $\rho\neq\stopA$.
Let $T$ be the reduction tree obtained by Lemma \ref{lem:sslemma}
out of such a sequence. From $T$ it is possible to build a subtree of the tree of the recursive calls of the
{\bf Prove} algorithm. Moreover such a subtree is a \FAIL\, subtree, resulting hence in a failure
for {\bf Prove}($\der \rho\complyF\sigma$).This means that $\not\!\der\rho\complyF\sigma$.
\end{proof}
}

\paragraph{Completeness}
The following lemma proves that compliance is preserved by the concatenation of histories to the left of the current histories.

\begin{lemma}\label{lem:hypSoundness} If
$\np {\vec{\delta}}{\rho} \complyR \np {\vec{\gamma}}{\sigma}$, then 
	 $\np {\vec{\delta'} \cons \vec{\delta}}{\rho} \complyR \np {\vec{\gamma'} \cons \vec{\gamma}}{\sigma}$
	 for all $\vec{\delta'}$ , $\vec{\gamma'}$. 
\end{lemma}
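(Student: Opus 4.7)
The plan is to argue by contradiction. Suppose a reduction
\[
\np{\vec{\delta'}\cons\vec{\delta}}{\rho}\pp\np{\vec{\gamma'}\cons\vec{\gamma}}{\sigma}\ored{*}\np{\vec{a}}{\rho'}\pp\np{\vec{b}}{\sigma'}\not\!\!\ored{}
\]
reaches a stuck configuration with $\rho'\neq\stopA$; I will derive a contradiction with the compliance hypothesis $\np{\vec{\delta}}{\rho}\complyR\np{\vec{\gamma}}{\sigma}$ by running a step-by-step simulation against a corresponding reduction from the smaller (\emph{mirror}) configuration $\np{\vec{\delta}}{\rho}\pp\np{\vec{\gamma}}{\sigma}$.

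The invariant to be maintained is that, at every step, the extended stacks coincide with the mirror stacks with $\vec{\delta'}$ and $\vec{\gamma'}$ still sitting at the bottom. Rules $(\CommRule)$ and $(\tau)$ look only at the current contracts, so they are mirrored trivially and preserve the invariant; a step by rule $(\RbkRule)$ is mirrored exactly when the rollback pops from the upper parts of the extended stacks, i.e.\ when the mirror's corresponding stacks are both non-empty.

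Two cases then arise. The easy case is when the simulation runs all the way to the terminal extended configuration without ever popping inside $\vec{\delta'}$ or $\vec{\gamma'}$: then the mirror ends at $\np{\vec{a}''}{\rho'}\pp\np{\vec{b}''}{\sigma'}$ with $\vec{a}=\vec{\delta'}\cons\vec{a}''$ and $\vec{b}=\vec{\gamma'}\cons\vec{b}''$. Since the extended configuration is stuck with $\rho'\neq\stopA$, rule $(\RbkRule)$ must fail because $\vec{a}$ or $\vec{b}$ is empty; case analysis then forces the corresponding mirror stack to be empty as well, so the mirror is stuck with client $\rho'\neq\stopA$, contradicting compliance.

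The main obstacle is the other case, in which at some intermediate step the extended reduction performs a rollback that pops inside $\vec{\delta'}$ or $\vec{\gamma'}$. By the invariant, at that step the mirror has at least one empty stack; since $(\RbkRule)$ fires in the extended configuration, no forward rule is applicable there, hence none is applicable in the mirror either; moreover $(\RbkRule)$ requires the current client to differ from $\stopA$. Thus the mirror is stuck at this intermediate configuration with a non-$\stopA$ client, again violating the compliance of $\np{\vec{\delta}}{\rho}$ with $\np{\vec{\gamma}}{\sigma}$. Combining the two cases yields the desired contradiction.
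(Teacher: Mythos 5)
Your proof is correct and follows essentially the same route as the paper's: both hinge on the observation that the forward rules ignore the stacks, so the appended bottom material can only matter through a rollback that reaches it, and at that very moment the mirror configuration is already stuck with a non-$\stopA$ client, contradicting the hypothesis. The only cosmetic difference is that you simulate both prepended histories at once via an explicit invariant, whereas the paper prepends a single element to one stack at a time and locates the first offending pop via a shortest-reduction (minimality) argument.
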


\begin{proof}
It suffices to show that
\[\np {\vec{\delta}}{\rho} \complyR \np {\vec{\gamma}}{\sigma} \implies 
	\np {\rho' \cons \vec{\delta}}{\rho} \complyR \np {\vec{\gamma}}{\sigma} \text{ and }
	\np { \vec{\delta}}{\rho} \complyR \np {\sigma' \cons \vec{\gamma}}{\sigma}
\]
which we prove by contraposition.

Suppose that $\np {\rho' \cons \vec{\delta}}{\rho} \ncomplyR \np {\vec{\gamma}}{\sigma}$; then
\[\np {\rho' \cons \vec{\delta}}{\rho} \pp \np {\vec{\gamma}}{\sigma} \ored{*} \np {\vec{\delta}'}{\rho''} \pp \np {\vec{\gamma}'}{\sigma''}
\not\!\!\ored{} \text{ and } \rho''\neq \stopA
\]
If $\rho'$ is never used, then  $\vec{\delta}' = \rho' \cons \vec{\delta}''$ and $\vec{\gamma}'=\emptystack$, so that
we get\[\np {\vec{\delta}}{\rho} \pp \np {\vec{\gamma}}{\sigma} 
\ored{*} \np {\vec{\delta}''}{\rho''} \pp \np \emptystack{\sigma''} \not\!\!\ored{}\]
Otherwise we have that
\[\np {\rho' \cons \vec{\delta}}{\rho} \pp \np {\vec{\gamma}}{\sigma} \ored{*} \np {\rho'}{\rho''} \pp \np {\vec{\gamma}'}{\sigma''}
\ored{} \np\emptystack\rho'\pp\np {\vec{\gamma}''}{\sigma'''}\]
and we assume that $\ored{*}$ is the shortest such reduction.
It follows that $\rho''\neq \stopA $. By the minimality assumption about the length of $\ored{*}$ we know
that $\rho'$ neither has been restored by some previous application of rule $(\rlbk)$, nor pushed back into the stack before. We get
\[\np {\vec{\delta}}{\rho} \pp \np {\vec{\gamma}}{\sigma} \ored{*} \np {\emptystack}{\rho''} \pp \np {\vec{\gamma}''}{\sigma''} \not\!\!\ored{}\]
In both cases we conclude that $\np {\vec{\delta}}{\rho} \ncomplyR \np {\vec{\gamma}}{\sigma}$ as desired. 

\medskip

Similarly we can show that $\np { \vec{\delta}}{\rho} \ncomplyR \np {\sigma' \cons \vec{\gamma}}{\sigma} \implies
\np {\vec{\delta}}{\rho} \ncomplyR \np {\vec{\gamma}}{\sigma}$.
\end{proof}

The following lemma gives all possible shapes of compliant contracts. It is the key lemma for the proof of completeness.
\begin{lemma}\label{lem:coinductiveChar}
If ${\rho} \complyR {\sigma}$, then one of the following conditions holds:
\begin{enumerate}
\item \label{lem:coinductiveChar-1} $\rho = \stopA$;
\item \label{lem:coinductiveChar-2} $\rho = \sum_{i\in I}\alpha_i.\rho_i$, $\sigma = \sum_{j\in J}\Dual{\alpha}_j.\sigma_j$ and 
	$\exists k \in I \cap J.\; {\rho_k} \complyR {\sigma_k}$;
\item \label{lem:coinductiveChar-3} $\rho = \bigoplus_{i\in I}\Dual{a}_i.\rho_i$, $\sigma = \sum_{j\in J}a_j.\sigma_j$,
	$I\subseteq J$ and $\forall k \in I. \; {\rho_k} \complyR {\sigma_k}$;
\item \label{lem:coinductiveChar-4} $\rho = \sum_{i\in I}a_i.\rho_i$, $\sigma = \bigoplus_{j\in J}\Dual{a}_j.\sigma_j$,
	$I\supseteq J$ and $\forall k \in J. \; {\rho_k} \complyR {\sigma_k}$.
\end{enumerate}
\end{lemma}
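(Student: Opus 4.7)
My plan is to reason by case analysis on the top-level shapes of $\rho$ and $\sigma$, modulo unfolding of outermost recursion under the equi-recursive convention. Each of $\rho$ and $\sigma$ is then one of $\stopA$, a retractable input sum, a retractable output sum, or an unretractable output sum. If $\rho = \stopA$, case~(1) of the lemma is immediate, so I assume $\rho \neq \stopA$. Among the twelve remaining shape combinations, exactly those matching cases~(2)-(4) of the lemma are compatible with $\rho \complyR \sigma$; the other eight should all allow a short reduction from $\np\emptystack\rho \pp \np\emptystack\sigma$ to a dead state $\np\emptystack{\rho'} \pp \np\emptystack{\sigma'} \not\!\!\ored{}$ with $\rho' \neq \stopA$, contradicting compliance. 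Lemma~\ref{lem:stack-len} is crucial here: starting from empty stacks, every reachable dead state has empty stacks, so rule $(\RbkRule)$ cannot rescue the deadlock.

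The shape-ruling-out arguments are uniform and short. Two retractable input sums offer only inputs and so cannot synchronise; a retractable sum paired with $\stopA$ is stuck at once; an unretractable output sum paired with anything other than a matching retractable input sum becomes stuck as soon as its internal $\tau$-choice exposes an output with no partner. Enumerating these situations leaves exactly the shape combinations listed in cases~(2)-(4) of the lemma. The structural side conditions $I \cap J \neq \emptyset$, $I \subseteq J$ and $I \supseteq J$ follow by the same dead-configuration argument: if an unretractable sum offers an output with no matching input on the other side, the internal choice exposes the mismatch and produces a deadlock at empty stacks.

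To extract the sub-compliance clauses I fix an appropriate branch $k$ and fire the synchronisation on its action, reaching a sub-configuration $\np{\vec{\delta}_k}{\rho_k} \pp \np{\vec{\gamma}_k}{\sigma_k}$. For cases~(3) and~(4), the universal quantifier over $k$ is handled by contraposition: if $\rho_k \not\complyR \sigma_k$, I lift a failing reduction of $\np\emptystack{\rho_k} \pp \np\emptystack{\sigma_k}$ into the enlarged sub-configuration (forward moves ignore the pushed histories) and reach a dead state with non-empty stacks; since the client of the enlarged state is not $\stopA$, rule $(\RbkRule)$ fires, restoring $\np\emptystack{\vec{\delta}_k} \pp \np\emptystack{\vec{\gamma}_k}$, where the placeholder $\circ$ pushed by rule $(\alpha)$ on the $\sigma$-side leaves the result itself dead and non-$\stopA$. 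Composing with the initial synchronisation yields a reduction from $\np\emptystack{\rho} \pp \np\emptystack{\sigma}$ to a dead non-compliant state, contradicting the hypothesis.

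The main obstacle will be case~(2), whose existential $\exists k \in I \cap J$ requires iterating the same idea along a chain of rollbacks. Assuming every $k \in I \cap J$ fails, I chain the failed attempts: after a rollback brings the system to $\np\emptystack{\rho''} \pp \np\emptystack{\sigma''}$ with the already-tried branches removed, I still find a fresh $k' \in (I \cap J) \setminus \{k\}$ and repeat, exhausting $I \cap J$ in finitely many steps before the system is permanently deadlocked with the client not yet $\stopA$. The delicate point is the book-keeping of stacks and placeholders $\circ$ between attempts, to ensure that each rollback really restores enough branches to launch the next iteration; Lemma~\ref{lem:hypSoundness}, used contrapositively to import non-compliance from empty to extended stacks, is the technical tool for transporting failures between the stacked and unstacked formulations.
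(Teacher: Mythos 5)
Your proposal is correct and follows essentially the same route as the paper's proof: contraposition over the top-level shapes, Lemma~\ref{lem:stack-len} to force empty stacks at stuck states, Lemma~\ref{lem:hypSoundness} to transport failing reductions to extended stacks, the exposed $\circ$ placeholder yielding a dead non-$\stopA$ configuration after rollback in the $\oplus$ cases, and the chained exhaustion of $I\cap J$ via successive rollbacks for case~(2). (One tiny slip: in case~(3) the $\circ$ is pushed by rule $(\alpha)$ on the \emph{client's} stack, which is precisely what makes the residual client non-$\stopA$ after the rollback.)
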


\begin{proof}
By contraposition and by cases of the possible shapes of $\rho$ and $\sigma$. 

\bigskip

Suppose $\rho = \sum_{i\in I}\alpha_i.\rho_i$, $\sigma = \sum_{j\in J}\Dual{\alpha}_j.\sigma_j$, $I \cap J=\set{k_1,\ldots,k_n}$ and 
$ \rho_{k_i} \ncomplyR \sigma_{k_i}$ for $1\leq i\leq n$. Then we get 
\[\np\emptystack\rho_{k_i} \pp \np\emptystack\sigma_{k_i} \ored{*} \np{\vec{\delta}_i}{\rho'_i} \pp \np{\vec{\gamma}_i}{\sigma'_i} \not\!\!\ored{}\]
for $1\leq i\leq n$, where $\rho'_i\neq\stopA$ and $\vec{\delta}_i = \vec{\gamma}_i = \emptystack$ by Lemma~\ref{lem:stack-len}. This implies 
\[\np{\mbox{\small$\sum$}_{i\in I\setminus\Set{k_1}}\alpha_i.\rho_i}\rho_{k_1} \pp \np{\mbox{\small$\sum$}_{j\in J\setminus\Set{k_1}}\Dual{\alpha}_j.\sigma_j}\sigma_{k_1} \ored{*}
 \np{\mbox{\small$\sum$}_{i\in I\setminus\Set{k_1}}\alpha_i.\rho_i}{\rho'_1} \pp \np{\mbox{\small$\sum$}_{j\in J\setminus\Set{k_1}}\Dual{\alpha}_j.\sigma_j}{\sigma'_1}\]
by Lemma~\ref{lem:hypSoundness}.
Let $I'=I\setminus J$ and $J'=J\setminus I$. We can reduce $\np\emptystack\rho \pp \np\emptystack\sigma$ only as follows:
\[\begin{array}{llll}
\np\emptystack\rho \pp \np\emptystack\sigma & \ored{} &
\np{\sum_{i\in I\setminus\Set{k_1}}\alpha_i.\rho_i}\rho_{k_1} \pp \np{\sum_{j\in J\setminus\Set{k_1}}\Dual{\alpha}_j.\sigma_j}\sigma_{k_1} & \mbox{by $(\CommRule)$}\\
&\ored{*}&
 \np{\sum_{i\in I\setminus\Set{k_1}}\alpha_i.\rho_i}{\rho'_1} \pp \np{\sum_{j\in J\setminus\Set{k_1}}\Dual{\alpha}_j.\sigma_j}{\sigma'_1} \\
 &\ored{}&
 \np\emptystack{\sum_{i\in I\setminus\Set{k_1}}\alpha_i.\rho_i} \pp \np\emptystack{\sum_{j\in J\setminus\Set{k_1}}\Dual{\alpha}_j.\sigma_j}
 	& \mbox{by $(\RbkRule)$}\\
	&~~\vdots&~\qquad\qquad\qquad\qquad\quad\vdots\\
	&\ored{*}&
 \np{\sum_{i\in I'}\alpha_i.\rho_i}{\rho'_n} \pp \np{\sum_{j\in J'}\Dual{\alpha}_j.\sigma_j}{\sigma'_n} \\
 &\ored{}&
 \np\emptystack{\sum_{i\in I'}\alpha_i.\rho_i} \pp \np\emptystack{\sum_{j\in J'}\Dual{\alpha}_j.\sigma_j}
 	& \mbox{by $(\RbkRule)$}
\end{array}\]
and $ \np\emptystack{\sum_{i\in I'}\alpha_i.\rho_i} \pp \np\emptystack{\sum_{j\in J'}\Dual{\alpha}_j.\sigma_j}$ is stuck since $I'\cap J'=\emptyset$.

\bigskip

Suppose $\rho = \bigoplus_{i\in I}\Dual{a}_i.\rho_i$ and $\sigma = \sum_{j\in J}a_j.\sigma_j$. If $I\not\subseteq J$ let
$k\in I\setminus J$; then we get
\[\begin{array}{llll}
\np\emptystack\rho \pp \np\emptystack\sigma & \ored{} &
\np{\emptystack}{\Dual{a}_k.\rho_k}\pp \np\emptystack{\sigma} & \mbox{by $(\tau)$} \\
& \not\!\!\ored{} &
\end{array}\]
Otherwise $I\subseteq J$ and $ {\rho_k} \ncomplyR {\sigma_k}$ for some $k\in I$. By reasoning as above we have
\[\np\emptystack\rho_k \pp \np\emptystack\sigma_k \ored{*} \np\emptystack{\rho'} \pp \np\emptystack{\sigma'} \not\!\!\ored{}\]
and 
\[  \np{\circ}{\rho_k}\pp \np{\mbox{\small$\sum$}_{j\in J\setminus\Set{k}}a_j.\sigma_j}{\sigma_k} 
 \ored{*} \np{\circ}{\rho'}\pp \np{\mbox{\small$\sum$}_{j\in J\setminus\Set{k}}a_j.\sigma_j}{\sigma'}\]
which imply
\[\begin{array}{llll}
\np\emptystack\rho \pp \np\emptystack\sigma 
& \ored{} &
\np{\emptystack}{\Dual{a}_k.\rho_k}\pp \np\emptystack{\sigma} & \mbox{by $(\tau)$} \\
&\ored{} & \np{\circ}{\rho_k}\pp \np{\sum_{j\in J\setminus\Set{k}}a_j.\sigma_j}{\sigma_k} & \mbox{by $(\CommRule)$} \\
& \ored{*} & \np{\circ}{\rho'}\pp \np{\sum_{j\in J\setminus\Set{k}}a_j.\sigma_j}{\sigma'} &\\
& \ored{} & \np\emptystack\circ \pp \np\emptystack{\sum_{j\in J\setminus\Set{k}}a_j.\sigma_j} & \mbox{by $(\RbkRule)$} \\
& \not\!\!\ored{} &
\end{array}\]
In both cases we conclude that ${\rho} \ncomplyR {\sigma}$.

\bigskip

The proof for the case  $\rho = \sum_{i\in I}a_i.\rho_i$, $\sigma = \bigoplus_{j\in J}\Dual{a}_j.\sigma_j$ is similar.
\end{proof}

\begin{theorem}[Completeness] \label{thr:completeness} If
$\rho\complyR\sigma $, then $\der \rho\complyF\sigma.$
\end{theorem}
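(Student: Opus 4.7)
The plan is to reduce completeness to the success of the decision procedure \Prove\ via Fact~\ref{fact:provecorr}(i). By that fact, it suffices to show that whenever $\rho \complyR \sigma$, the call $\Prove(\der \rho \complyF \sigma)$ does not return \FAIL. Since $\Prove$ always terminates (Theorem~\ref{the:a}), I can argue by well-founded induction on the (finite) call tree of $\Prove$, strengthening the statement to carry an invariant on the context $\Gamma$.

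The statement I would prove by induction is: $(\star)$ if $\rho \complyR \sigma$ and every hypothesis $\rho' \complyF \sigma' \in \Gamma$ satisfies $\rho' \complyR \sigma'$, then $\Prove(\Gamma \der \rho \complyF \sigma) \neq \FAIL$. The base cases are immediate: if $\rho = \stopA$, the axiom rule fires; if $\rho \complyF \sigma \in \Gamma$, the hypothesis rule fires. Otherwise I apply Lemma~\ref{lem:coinductiveChar} to $\rho \complyR \sigma$ to determine the shape of $\rho$ and $\sigma$, which must fall in one of its cases 2--4.

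In each of these cases, Lemma~\ref{lem:coinductiveChar} not only dictates the matching top-level rule of the formal system but also yields the semantic compliance of the required continuations: in case 2 some index $k \in I \cap J$ with $\rho_k \complyR \sigma_k$, and in cases 3 and 4 all continuations are pairwise compliant together with the inclusion condition on $I$ and $J$. These are exactly the preconditions the corresponding branches of \Prove\ test. For every recursive call $\Prove(\Gamma, \rho \complyF \sigma \der \rho_k \complyF \sigma_k)$, the extended context still satisfies the invariant because $\rho \complyR \sigma$ by assumption. Hence the inductive hypothesis applies to each recursive call, giving a non-\FAIL\ result, and \Prove\ therefore succeeds at the current level.

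The main obstacle is the mismatch between the semantic notion of compliance (closed under reductions of stacked configurations) and the purely top-level case analysis performed by the deduction rules. Lemma~\ref{lem:coinductiveChar} bridges exactly this gap by translating $\np\emptystack\rho \complyR \np\emptystack\sigma$ into structural information about $\rho$ and $\sigma$; once it is in place, the induction is routine. Applying $(\star)$ with $\Gamma = \emptyset$ yields $\der \rho \complyF \sigma$, completing the proof.
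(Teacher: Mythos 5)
Your proposal is correct and follows essentially the same route as the paper: reduce to non-failure of \Prove\ via Fact~\ref{fact:provecorr}, use termination (Theorem~\ref{the:a}) to get a well-founded induction, and use Lemma~\ref{lem:coinductiveChar} to match each case of the procedure. You merely make explicit the induction on the call tree (and an invariant on $\Gamma$ that is in fact harmless but not strictly needed, since extra hypotheses can only make \Prove\ succeed via the $(\CkptcomplHyp)$ rule) that the paper's two-line proof leaves implicit.
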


\begin{proof}
By Theorem \ref{the:a} each computation of  
\Prove$(\der \rho\complyF\sigma) $ always terminates.
By Lemma \ref{lem:coinductiveChar} and Fact \ref{fact:provecorr}, $\rho\complyR\sigma$ implies that 
\Prove$(\der \rho\complyF\sigma) \neq$ \FAIL, and hence $\der \rho\complyF\sigma$. 
\end{proof}

\section{Related work and conclusions}


Since the pioneering work by Danos and Krivine~\cite{DK04}, reversible
concurrent computations have been widely studied. A main point is that
understanding which actions can be reversed is not trivial in a
concurrent setting, since there is no unique ``last''
action. Since~\cite{DK04}, the most common notion of reversibility in
concurrency is \emph{causal-consistent} reversibility: any action can
be undone if no other action depending on it has been executed (and
not yet undone). The name highlights the 
relation with causality, which 
makes the approach applicable even in settings where there is no
unique notion of time, but makes it quite complex.

The first calculus for which a causal-consistent reversible extension has been defined is CCS in~\cite{DK04}, using a stack of
memories for each thread. Later, causal-consistent reversible extensions have been defined by Phillips and
Ulidowski~\cite{PU07} for calculi definable by SOS rules in a general
format (without mobility), using keys to bind synchronised actions
together, and by Lanese et al.~\cite{LMS10} for the higher-order
$\pi$-calculus, using explicit memory processes to store history
information and tags to track causality. A survey of causal-consistent
reversibility can be found in~\cite{LMT14}.

In~\cite{LMSS11}, Lanese et al.\ enrich the calculus of~\cite{LMS10}
with a fine-grained rollback primitive, showing the subtleties of
defining a rollback operator in a concurrent setting.  The first papers
exploring reversibility in a context of sessions (see, e.g., \cite{LaneveP08} for a comparison between
session types and contracts) are~\cite{TY14,TY15}, by
Tiezzi and Yoshida. These papers define the semantics for reversible
sessions by adapting the approach in~\cite{LMS10}, but do not
consider compliance. Compliance has been first studied
in~\cite{BarbaneraDdL14}. We already discussed the differences between
the present work and~\cite{BarbaneraDdL14} in the Introduction.

A main point of our approach is that it exploits the fact that
contracts describe sequential interactions (in a concurrent setting)
to avoid the complexity of causal-consistent reversibility, allowing
for a simpler semantics (compared, e.g., to the one of~\cite{LMSS11}).

Similarly to our approach, long running transactions with
compensations, and in particular interacting
transactions~\cite{VriesKH10}, allow to undo past agreements. In
interacting transactions, however, a new possibility is tried when an
exception is raised, not when an agreement cannot be found as in our
case. Also, the possible options are sorted: first the normal
execution, then the compensation.  Finally, compliance of interacting
transactions has never been studied. In the field of sessions, the
most related works are probably the ones studying exceptions in binary
sessions~\cite{CarboneHY08} and in multi-party
sessions~\cite{CapecchiGY10}. As for transactions, they aim at dealing
with exceptions more than at avoiding to get stuck since an agreement
cannot be found.

We plan to investigate whether our approach can be extended to
multi-party sessions~\cite{HYC08}, the rationale being that parallelism
is controlled by the global type, hence possibly part of the
complexity due to concurrency can be avoided. The sub-behaviour
relation induced by our notion of compliance is also worth being
thoroughly studied.

\paragraph{Acknowledgments.} 
We are grateful to the anonymous reviewers for their useful remarks.
%
\label{sect:bib}
\bibliographystyle{eptcs}
\bibliography{session}

\begin{thebibliography}{10}
\providecommand{\bibitemdeclare}[2]{}
\providecommand{\surnamestart}{}
\providecommand{\surnameend}{}
\providecommand{\urlprefix}{Available at }
\providecommand{\url}[1]{\texttt{#1}}
\providecommand{\href}[2]{\texttt{#2}}
\providecommand{\urlalt}[2]{\href{#1}{#2}}
\providecommand{\doi}[1]{doi:\urlalt{http://dx.doi.org/#1}{#1}}
\providecommand{\bibinfo}[2]{#2}

\bibitemdeclare{inproceedings}{BdL10}
\bibitem{BdL10}
\bibinfo{author}{Franco \surnamestart Barbanera\surnameend} \&
  \bibinfo{author}{Ugo \surnamestart de'Liguoro\surnameend}
  (\bibinfo{year}{2010}): \emph{\bibinfo{title}{{Two Notions of Sub-behaviour
  for Session-based Client/Server Systems}}}.
\newblock In: {\sl \bibinfo{booktitle}{PPDP}}, \bibinfo{publisher}{ACM Press},
  pp. \bibinfo{pages}{155--164}, \doi{10.1145/1836089.1836109}.

\bibitemdeclare{inproceedings}{BarbaneraDdL14}
\bibitem{BarbaneraDdL14}
\bibinfo{author}{Franco \surnamestart Barbanera\surnameend},
  \bibinfo{author}{Mariangiola \surnamestart Dezani{-}Ciancaglini\surnameend}
  \& \bibinfo{author}{Ugo \surnamestart de'Liguoro\surnameend}
  (\bibinfo{year}{2014}): \emph{\bibinfo{title}{{Compliance for Reversible
  Client/Server Interactions}}}.
\newblock In: {\sl \bibinfo{booktitle}{BEAT}}, {\sl \bibinfo{series}{{EPTCS}}}
  \bibinfo{volume}{162}, pp. \bibinfo{pages}{35--42},
  \doi{10.4204/EPTCS.162.5}.

\bibitemdeclare{article}{BH13}
\bibitem{BH13}
\bibinfo{author}{Giovanni \surnamestart Bernardi\surnameend} \&
  \bibinfo{author}{Matthew \surnamestart Hennessy\surnameend}
  (\bibinfo{year}{2014}): \emph{\bibinfo{title}{{Modelling Session Types using
  Contracts}}}.
\newblock {\sl \bibinfo{journal}{Math. Struct. in Comp. Science}},
  \doi{10.1017/S0960129514000243}.
\newblock \bibinfo{note}{To appear}.

\bibitemdeclare{inproceedings}{CapecchiGY10}
\bibitem{CapecchiGY10}
\bibinfo{author}{Sara \surnamestart Capecchi\surnameend},
  \bibinfo{author}{Elena \surnamestart Giachino\surnameend} \&
  \bibinfo{author}{Nobuko \surnamestart Yoshida\surnameend}
  (\bibinfo{year}{2010}): \emph{\bibinfo{title}{Global Escape in Multiparty
  Sessions}}.
\newblock In: {\sl \bibinfo{booktitle}{FSTTCS}}, {\sl
  \bibinfo{series}{LIPIcs}}~\bibinfo{volume}{8}, \bibinfo{publisher}{Schloss
  Dagstuhl - Leibniz-Zentrum fuer Informatik}, pp. \bibinfo{pages}{338--351},
  \doi{10.4230/LIPIcs.FSTTCS.2010.338}.

\bibitemdeclare{inproceedings}{CarboneHY08}
\bibitem{CarboneHY08}
\bibinfo{author}{Marco \surnamestart Carbone\surnameend},
  \bibinfo{author}{Kohei \surnamestart Honda\surnameend} \&
  \bibinfo{author}{Nobuko \surnamestart Yoshida\surnameend}
  (\bibinfo{year}{2008}): \emph{\bibinfo{title}{Structured Interactional
  Exceptions in Session Types}}.
\newblock In: {\sl \bibinfo{booktitle}{CONCUR}}, {\sl \bibinfo{series}{LNCS}}
  \bibinfo{volume}{5201}, \bibinfo{publisher}{Springer}, pp.
  \bibinfo{pages}{402--417}, \doi{10.1007/978-3-540-85361-9\_32}.

\bibitemdeclare{article}{CGP10}
\bibitem{CGP10}
\bibinfo{author}{Giuseppe \surnamestart Castagna\surnameend},
  \bibinfo{author}{Nils \surnamestart Gesbert\surnameend} \&
  \bibinfo{author}{Luca \surnamestart Padovani\surnameend}
  (\bibinfo{year}{2009}): \emph{\bibinfo{title}{{A Theory of Contracts for Web
  Services}}}.
\newblock {\sl \bibinfo{journal}{ACM Trans. on Prog. Lang. and Sys.}}
  \bibinfo{volume}{31}(\bibinfo{number}{5}), pp. \bibinfo{pages}{19:1--19:61},
  \doi{10.1145/1538917.1538920}.

\bibitemdeclare{inproceedings}{DK04}
\bibitem{DK04}
\bibinfo{author}{Vincent \surnamestart Danos\surnameend} \&
  \bibinfo{author}{Jean \surnamestart Krivine\surnameend}
  (\bibinfo{year}{2004}): \emph{\bibinfo{title}{Reversible Communicating
  Systems}}.
\newblock In: {\sl \bibinfo{booktitle}{CONCUR}}, {\sl \bibinfo{series}{LNCS}}
  \bibinfo{volume}{3170}, \bibinfo{publisher}{Springer}, pp.
  \bibinfo{pages}{292--307}, \doi{10.1007/978-3-540-28644-8\_19}.

\bibitemdeclare{inproceedings}{HYC08}
\bibitem{HYC08}
\bibinfo{author}{Kohei \surnamestart Honda\surnameend}, \bibinfo{author}{Nobuko
  \surnamestart Yoshida\surnameend} \& \bibinfo{author}{Marco \surnamestart
  Carbone\surnameend} (\bibinfo{year}{2008}): \emph{\bibinfo{title}{Multiparty
  Asynchronous Session Types}}.
\newblock In: {\sl \bibinfo{booktitle}{POPL}}, \bibinfo{publisher}{ACM Press},
  pp. \bibinfo{pages}{273--284}, \doi{10.1145/1328897.1328472}.

\bibitemdeclare{inproceedings}{LMSS11}
\bibitem{LMSS11}
\bibinfo{author}{Ivan \surnamestart Lanese\surnameend},
  \bibinfo{author}{Claudio~Antares \surnamestart Mezzina\surnameend},
  \bibinfo{author}{Alan \surnamestart Schmitt\surnameend} \&
  \bibinfo{author}{Jean-Bernard \surnamestart Stefani\surnameend}
  (\bibinfo{year}{2011}): \emph{\bibinfo{title}{Controlling Reversibility in
  Higher-Order Pi}}.
\newblock In: {\sl \bibinfo{booktitle}{CONCUR}}, {\sl \bibinfo{series}{LNCS}}
  \bibinfo{volume}{6901}, \bibinfo{publisher}{Springer}, pp.
  \bibinfo{pages}{297--311}, \doi{10.1007/978-3-642-23217-6\_20}.

\bibitemdeclare{inproceedings}{LMS10}
\bibitem{LMS10}
\bibinfo{author}{Ivan \surnamestart Lanese\surnameend},
  \bibinfo{author}{Claudio~Antares \surnamestart Mezzina\surnameend} \&
  \bibinfo{author}{Jean-Bernard \surnamestart Stefani\surnameend}
  (\bibinfo{year}{2010}): \emph{\bibinfo{title}{Reversing Higher-Order Pi}}.
\newblock In: {\sl \bibinfo{booktitle}{CONCUR}}, {\sl \bibinfo{series}{LNCS}}
  \bibinfo{volume}{6269}, \bibinfo{publisher}{Springer}, pp.
  \bibinfo{pages}{478--493}, \doi{10.1007/978-3-642-15375-4\_33}.

\bibitemdeclare{article}{LMT14}
\bibitem{LMT14}
\bibinfo{author}{Ivan \surnamestart Lanese\surnameend},
  \bibinfo{author}{Claudio~Antares \surnamestart Mezzina\surnameend} \&
  \bibinfo{author}{Francesco \surnamestart Tiezzi\surnameend}
  (\bibinfo{year}{2014}): \emph{\bibinfo{title}{Causal-Consistent
  Reversibility}}.
\newblock {\sl \bibinfo{journal}{Bulletin of the {EATCS}}}
  \bibinfo{volume}{114}.

\bibitemdeclare{inproceedings}{LaneveP08}
\bibitem{LaneveP08}
\bibinfo{author}{Cosimo \surnamestart Laneve\surnameend} \&
  \bibinfo{author}{Luca \surnamestart Padovani\surnameend}
  (\bibinfo{year}{2008}): \emph{\bibinfo{title}{The Pairing of Contracts and
  Session Types}}.
\newblock In: {\sl \bibinfo{booktitle}{Concurrency, Graphs and Models}}, {\sl
  \bibinfo{series}{LNCS}} \bibinfo{volume}{5065}, pp.
  \bibinfo{pages}{681--700}, \doi{10.1007/978-3-540-68679-8\_42}.

\bibitemdeclare{article}{PU07}
\bibitem{PU07}
\bibinfo{author}{Iain C.~C. \surnamestart Phillips\surnameend} \&
  \bibinfo{author}{Irek \surnamestart Ulidowski\surnameend}
  (\bibinfo{year}{2007}): \emph{\bibinfo{title}{{Reversing Algebraic Process
  Calculi}}}.
\newblock {\sl \bibinfo{journal}{J. of Logic and Alg. Progr.}}
  \bibinfo{volume}{73}(\bibinfo{number}{1-2}), pp. \bibinfo{pages}{70--96},
  \doi{10.1016/j.jlap.2006.11.002}.

\bibitemdeclare{inproceedings}{TY14}
\bibitem{TY14}
\bibinfo{author}{Francesco \surnamestart Tiezzi\surnameend} \&
  \bibinfo{author}{Nobuko \surnamestart Yoshida\surnameend}
  (\bibinfo{year}{2014}): \emph{\bibinfo{title}{Towards Reversible Sessions}}.
\newblock In: {\sl \bibinfo{booktitle}{PLACES}}, {\sl \bibinfo{series}{EPTCS}}
  \bibinfo{volume}{155}, pp. \bibinfo{pages}{17--24},
  \doi{10.4204/EPTCS.155.3}.

\bibitemdeclare{article}{TY15}
\bibitem{TY15}
\bibinfo{author}{Francesco \surnamestart Tiezzi\surnameend} \&
  \bibinfo{author}{Nobuko \surnamestart Yoshida\surnameend}
  (\bibinfo{year}{2015}): \emph{\bibinfo{title}{Reversible session-based
  pi-calculus}}.
\newblock {\sl \bibinfo{journal}{J. Log. Algebr. Meth. Program.}}
  \bibinfo{volume}{84}(\bibinfo{number}{5}), pp. \bibinfo{pages}{684--707},
  \doi{10.1016/j.jlamp.2015.03.004}.

\bibitemdeclare{inproceedings}{VriesKH10}
\bibitem{VriesKH10}
\bibinfo{author}{Edsko \surnamestart de~Vries\surnameend},
  \bibinfo{author}{Vasileios \surnamestart Koutavas\surnameend} \&
  \bibinfo{author}{Matthew \surnamestart Hennessy\surnameend}
  (\bibinfo{year}{2010}): \emph{\bibinfo{title}{Communicating Transactions -
  (Extended Abstract)}}.
\newblock In: {\sl \bibinfo{booktitle}{CONCUR}}, {\sl \bibinfo{series}{LNCS}}
  \bibinfo{volume}{6269}, \bibinfo{publisher}{Springer}, pp.
  \bibinfo{pages}{569--583}, \doi{10.1007/978-3-642-15375-4\_39}.

\end{thebibliography}



\end{document}
